\documentclass[lettersize,journal]{IEEEtran}
\usepackage{amsmath, amsfonts}
\usepackage{amsthm}
\usepackage{amssymb}
\usepackage{easyReview}
\usepackage{algorithmic}
\usepackage{algorithm}
\usepackage{array}
\usepackage[caption=false,font=normalsize,labelfont=sf,textfont=sf]{subfig}
\usepackage{textcomp}
\usepackage{stfloats}
\usepackage{url}
\usepackage{bm}
\usepackage{verbatim}
\usepackage{graphicx}
\usepackage{cite}
\usepackage{caption}
\usepackage{booktabs}
\makeatletter
\renewcommand{\maketag@@@}[1]{\hbox{\m@th\normalsize\normalfont#1}}%
\makeatother

\hyphenation{op-tical net-works semi-conduc-tor IEEE-Xplore}
\allowdisplaybreaks[4]

\newtheorem{Proposition}{Proposition}

\begin{document}



\title{Coordinated Beamforming for RIS-Empowered ISAC Systems over Secure Low-Altitude Networks}

\author{
        Chunjie Wang,
        Xuhui Zhang,
        Wenchao Liu,
        Jinke Ren,
        Huijun Xing,
        Shuqiang Wang,
        and Yanyan Shen

\thanks{Chunjie Wang is with Shenzhen Institute of Advanced Technology, Chinese Academy of Sciences, Shenzhen 518055, China, and also with University of Chinese Academy of Sciences, Beijing 100049, China (e-mail: cj.wang@siat.ac.cn)
(\emph{Corresponding author: Yanyan Shen})
}

\thanks{
Xuhui Zhang and Jinke Ren are with the Shenzhen Future Network of Intelligence Institute, the School of Science and Engineering, and the Guangdong Provincial Key Laboratory of Future Networks of Intelligence, The Chinese University of Hong Kong, Shenzhen, Guangdong 518172, China (e-mail: xu.hui.zhang@foxmail.com; jinkeren@cuhk.edu.cn).
}

\thanks{
Wenchao Liu is with the School of System Design and Intelligent Manufacturing, Southern University of Science and Technology, Shenzhen 518055, China (e-mail: wc.liu@foxmail.com).
}

\thanks{
Huijun Xing is with the Department of Electrical and Electronic Engineering, Imperial College London, London SW7 2AZ, The United Kingdom (e-mail: huijunxing@link.cuhk.edu.cn).
}

\thanks{
Shuqiang Wang and Yanyan Shen are with Shenzhen Institute of Advanced Technology, Chinese Academy of Sciences, Guangdong 518055, China, and also with the Faculty of Computer Science and Control Engineering, Shenzhen University of Advanced Technology, Guangdong 518055, China (e-mail: sq.wang@siat.ac.cn; yy.shen@siat.ac.cn)
}

}

\maketitle

\begin{abstract}
Emerging as a cornerstone for next-generation wireless networks, integrated sensing and communication (ISAC) systems demand innovative solutions to balance spectral efficiency and sensing accuracy.
In this paper, we propose a coordinated beamforming framework for a reconfigurable intelligent surface (RIS)-empowered ISAC system, where the active precoding at the dual-functional base station (DFBS) and the passive beamforming at the RIS are jointly optimized to provide communication services for legitimate unmanned aerial vehicles (UAVs) while sensing the unauthorized UAVs. The sum-rate of all legitimate UAVs are maximized, while satisfying the radar sensing signal-to-noise ratio requirements, the transmit power constraints, and the reflection coefficients of the RIS. To address the inherent non-convexity from coupled variables, we propose a low-complexity algorithm integrating fractional programming with alternating optimization, featuring convergence guarantees. Numerical results demonstrate that the proposed algorithm achieves higher data rate compared to disjoint optimization benchmarks.
This underscores RIS's pivotal
role in harmonizing communication and target sensing functionalities for low-altitude networks. 
\end{abstract}

\begin{IEEEkeywords}
Integrated sensing and communication (ISAC), reconfigurable intelligent surface (RIS), low-altitude networks, fractional programming.
\end{IEEEkeywords}

\section{Introduction}
\IEEEPARstart {T}{he} advent of beyond-5G (B5G) and 6G networks is catalyzing transformative architectures for low-altitude networks (LANs), which have emerged as vital enablers of the burgeoning low-altitude economy (LAE) applications \cite{10955337, 10879807}.
However, current terrestrial-centric wireless networks struggle to address the channel dynamics and user mobility challenges in the vertical dimension, where the three-dimensional (3D) intelligent beamforming and energy-efficient resource allocation become pivotal solutions to address these challenges \cite{10833672, 10972017}.
Characterized by the deployment of unmanned aerial vehicles (UAVs), vertical take-off and landing vehicles, and autonomous aerial systems operating below 1,000 meters, LANs achieve unprecedented communication guarantees, e.g., low latency for collision avoidance, high reliability for mission-critical transmission, and improved throughput for real-time air traffic control \cite{10693789, 10976653}. 
Despite these advancements, the inherent mobility of UAVs introduces dynamic channel conditions, where line-of-sight (LoS) links are prone to intermittent blockages from urban structures. This uncertainty in air-to-ground connectivity poses significant challenges for maintaining reliable communication quality, particularly in mission-critical operations requiring real-time coordination.

Reconfigurable intelligent surfaces (RISs), also known as intelligent reflecting surfaces (IRSs), as emerging programmable electromagnetic metamaterials, address the persistent challenge of dynamic signal blockages in the UAV-enabled LANs by dynamically manipulating the phase and amplitude of incident electromagnetic waves with sub-wavelength precision, enabling real-time reconstruction of non-LoS (NLoS) links or enhancement of existing LoS paths \cite{9326394, 10288203}.
Unlike conventional solutions requiring active radio frequency components or additional base stations (BSs), the RISs achieve improvements in signal-to-noise ratio (SNR)  while consuming less energy \cite{9140329}, making them uniquely suitable for energy-constrained UAV-enabled systems.
Their adaptive beam-steering capability allows seamless alignment with mobile UAVs even in dense urban environments, reducing outage probability for low-altitude regions through strategic deployment on buildings, towers, and aerial platforms \cite{8959174}.
Furthermore, the RISs provide spatial multiplexing gains by generating orthogonal virtual channels via phase shifting, while synergizing with the UAVs to extend 3D beam coverage \cite{10945642}.
These combined advantages position the RISs as pivotal enablers for ultra-reliable, energy-efficient connectivity in the next-generation LANs.

Meanwhile, the popularity of emerging applications such as ultra-high-resolution video transmission, large-scale Internet of Things (IoT) interconnection, and intelligent driving, is driving the exponential growth of global mobile demands for data \cite{9252924, 10721288}. 
However, the contradiction between the natural limitations of wireless spectrum and the fixed allocation method has become increasingly prominent. The low-band resources are almost saturated. Although the high-band technology can expand the spectrum boundary, it is subject to high hardware costs and signal propagation losses, making it difficult to achieve global coverage. At the same time, the problem of spectrum fragmentation and resource redundancy caused by the independent operation of communication and sensing in traditional wireless networks has further exacerbated the imbalance between supply and demand, making the efficient use of spectrum a core challenge in the evolution of wireless networks \cite{10539120}.
The integrated sensing and communication (ISAC) technology provides a new paradigm for breaking through the bottleneck of network efficiency through deep multiplexing of spectrum resources and coordinated design of signal waveforms \cite{9737357}.
The core advantage of the ISAC lies in its ability to utilize the same spectrum resource for both communication and sensing signals, enabling simultaneous data transmission and environmental perception, thereby significantly improving spectrum utilization.

Although recent advances in RIS-empowered ISAC systems demonstrate substantial improvements in throughput and spectrum efficiency,
existing architectures struggle to concurrently maintain robust connectivity for legitimate UAVs (L-UAVs) and precise detection of unauthorized intruders 
over LANs.
To bridge this gap, we introduce a novel framework where a dual-functional BS (DFBS) detects one point-like unauthorized UAV (U-UAV) and simultaneously provides communication services for multiple L-UAVs via the direct links as well as with the assistance of a RIS.
By jointly optimizing the active beamforming at the DFBS and the passive beamforming at the RIS, we aim to promote the sum-rate of the L-UAVs under the requirement of sensing SNR for detecting the U-UAV, the total transmit power, and the physical restriction of the RIS phase. 
However, due to the non-linear and non-convexity of the formulated problem, traditional optimization methods are difficult to solve the problem efficiently. Therefore, a low-complexity algorithm based on fractional programming (FP) and alternating optimization (AO) methods is proposed. The simulation results demonstrate the advantages of deploying RIS in ISAC systems for LANs and the efficiency of the proposed algorithm.

The following is a summary of key contributions of this work.
\begin{itemize}
    \item We propose a novel coordinated beamforming framework that integrates active beamforming at the DFBS and passive beamforming at the RIS for LANs. Meanwhile, we formulate an optimization problem to maximize the sum-rate of L-UAVs communication while ensuring sensing SNR constraints for detecting the U-UAV.
    \item To address the non-convexity of the optimization problem, we develop an efficient algorithm leveraging FP and AO methods. This algorithm decomposes the problem into three tractable subproblems, enabling iterative updates of the auxiliary variables, the active beamforming and the passive beamforming.
    \item Numerical results demonstrate that the proposed algorithm achieves higher sum-rates of all L-UAVs compared to benchmarks. This underscores the pivotal role of RIS in enhancing spectral efficiency in ISAC systems for LANs.
\end{itemize}

\textit{Organizations:}
The remainder of this paper is organized as follows.
Section II reviews the related works.
In Section III, we introduce the RIS-empowered ISAC system for LANs and formulate the sum-rate maximization problem.
In Section IV, we propose a low-complexity algorithm integrating FP with AO methods.
Section V evaluates the performance of the proposed algorithm.
Finally, Section VI concludes the paper.

\textit{Notations:}
Throughout this paper, we employ the following mathematical notations.
Let $ \mathrm{j} $ denote the imaginary unit satisfying $ {\mathrm{j}^2} = -1 $.
For a complex number $z$, $ \Re \{ z \} $ represents its real component.
$ {{\mathbb{C}}^{M\times N}} $ represents the $ M \times N $ complex matrix $ \mathbb{C} $.
Given a matrix $ {\bf{G}} $, its conjugate transpose and transpose are denoted by $ {{\bf{G}}^{\mathsf{H}}} $ and $ {{\bf{G}}^{\mathsf{T}}} $, respectively.
For a vector $ {\bf{w}} $, $ ||{\bf{w}}|| $ indicates the Euclidean norm, and $ \mathrm{diag}({\bf{w}}) $ generates a diagonal matrix with entries from $ {\bf{w}} $.
The notation $ {\cal C}{\cal N}(\mu ,{\sigma ^2}) $ signifies a circularly symmetric complex Gaussian distribution with mean $ \mu $ and variance $ {\sigma ^2} $.

\section{Related Works}
\subsection{UAV over LANs}
UAVs have emerged as pivotal enablers of LAE applications, providing agile and reconfigurable platforms for emergency communication restoration, real-time media streaming, large-scale data harvesting, and time-sensitive logistics in complex environments \cite{8918497, 9604506}.
Recent studies have explored UAV-enabled LANs to address coverage gaps in disaster response, optimize energy-efficient content delivery for aerial surveillance, and enhance throughput for cargo delivery coordination \cite{9456851}.
In \cite{10458879}, the authors investigated the energy-efficient UAV scheduling and task offloading for the IoT devices under demand uncertainty.
Furthermore, a UAV-enabled heterogeneous mobile edge computing offloading framework was investigated in \cite{10606316}, and the processed data volume was optimized for IoT devices.
In \cite{10632079, 10980172}, the UAVs were dispatched to collect data transmitted by ground users, leveraging age of information-aware task prioritization to minimize data staleness in time-critical scenarios.
The content access policy was optimized in a UAV-enabled content service network to tackle the difficulty of explosive growth of mobile data traffic \cite{10964793}.
Moreover, the training latency was minimized in a UAV-enabled federated learning system for enabling edge intelligence in LANs \cite{10972043}.
However, these prior works primarily focus on the UAV mobility and communication services, overlooking the critical demand for real-time signal sensing to detect and mitigate anomalies in the LANs, such as U-UAV intrusions, environmental interference, and spectrum scarcity under dynamic operational conditions.

\subsection{ISAC}
Inspired by the capability of the ISAC technology to achieve substantial spectral efficiency gains while addressing the sensing requirements of next-generation wireless networks,
\cite{10571789} studied a full-duplex multi-user ISAC system, and optimized the communication and sensing resource allocation.
Meanwhile, in \cite{10556732}, the transmit beamforming for a multiple-input and multiple-output ISAC system with multiple radar targets and communication users was investigated.
Benefiting from the ISAC capabilities, recent studies have deployed ISAC in LANs to achieve joint communication-sensing enhancement, such as simultaneous aerial target detection and high-throughput data transmission \cite{10693833, 10602493, chen2025full}.
Firstly, the low-altitude platform was enhanced by the ISAC in \cite{10693833}, where the joint beamforming of communication and sensing was optimized with the assistance of a movable antenna array.
In \cite{10602493}, a resource allocation problem for a multi-UAV assisted ISAC system was studied, where the dual-functional UAVs performed simultaneous sensing of a target and data communication with ground users.
The ISAC over LANs was also capable of providing computing services for edge intelligence computing and sensing requirements \cite{chen2025full}.
The above applications fully validate the effectiveness of ISAC technology in improving spectral efficiency and its multi-scenario adaptability, demonstrating the broad deployment potential from fixed BSs to dynamic UAV platforms.
However, the high sharing of resources also brings complex challenges, such as the spatiotemporal competition between communication data flow and sensing beams may lead to signal mutual interference. Therefore, how to balance spectral efficiency and sensing reliability is still a key problem to be solved urgently.
     
\subsection{RIS-Empowered Systems}	 
Despite the promising advances in low-altitude ISAC systems, their limited spectral efficiency remains a critical bottleneck, particularly in scenarios requiring simultaneous high-precision sensing and ultra-reliable data transmission.  
This challenge motivates the integration of RISs to dynamically reshape wireless channels by tuning phase, amplitude, and polarization states.
To effectively utilize the RIS for improving the ISAC systems, \cite{10613843} proposed two novel designs targeting different scenarios, i.e., spectral-efficient ISAC and energy-efficient ISAC, and the simulation results showed that the two designs can significantly improve the weighted sum-rate performance and communication energy efficiency.
In \cite{10298597}, the authors proposed a secure RIS-empowered ISAC system that utilized a RIS to enhance legitimate communication while treating the radar target as a potential eavesdropper, aiming to maximize the radar output SNR.
In \cite{10453349}, the authors proposed a RIS-assisted UAV-enabled ISAC system, where a dual-functional UAV simultaneously transmitted signals to multiple users and performed sensing missions by jointly optimizing the RIS phase shifts, UAV trajectory, dual-function radar-communication beamforming, and user scheduling to maximize the sum-rate and sensing SNR.
Although existing research has initially explored the application potential of RIS in ISAC systems, there are still challenges in how to efficiently and jointly optimize active and passive beamforming to improve the communication and sensing performance for LANs.  
	
\section{System Model and Problem Formulation}
\begin{figure}[t]
	\centering
	\includegraphics[width=0.9\linewidth]{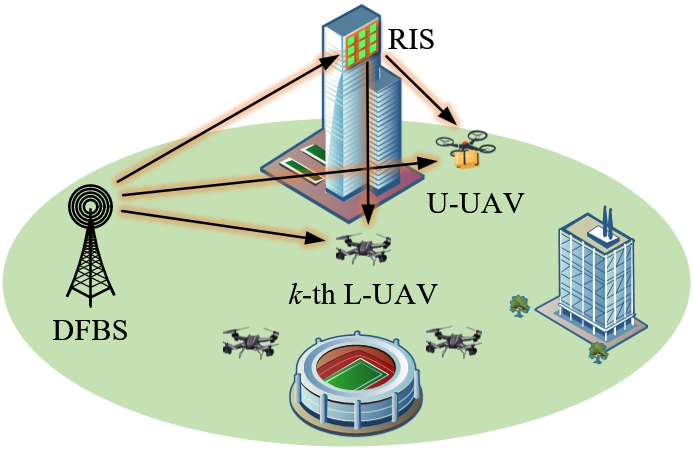}
	\captionsetup{justification=centering}
	\caption{The system model of the RIS-empowered ISAC for the LANs.}
	\label{fig:1}
\end{figure}


As illustrated in Fig. \ref{fig:1}, the considered RIS-empowered ISAC system consists of a multi-antenna DFBS equipped with $ M $ transmit/receive antennas arranged in uniform linear arrays with half-wavelength spacing, an $ N $-element RIS, $ K $ single-antenna L-UAVs, and one single-antenna U-UAV. The DFBS jointly provides communication services for the L-UAVs and senses to detect the U-UAV with the assistance of the RIS. For notational clarity, the index sets \({\cal N} = \{1, 2,\ldots, N\}\) and \({\cal K} = \{1, 2,\ldots, K\}\) are defined for the RIS elements and the L-UAVs, respectively.

We assume that the position of all UAVs is quasi-varying across time duration \cite{9795892, 10683327}.
The total time duration is divided into \(L\) time slots with the index set \({\cal L} = \{1, 2,\ldots, L\}\), and $ H $ denotes the constant height of the UAVs. Within each time slot \(l\), the position of the $k$-th L-UAV and the U-UAV are given by \({\bf{q}}_k[l] = (q_k^x{[l]}, q_k^y{[l]}, H)\), and \({\bf{q}}_{\mathrm{U}}[l] = (q_{\mathrm{U}}^x{[l]}, q_{\mathrm{U}}^y{[l]}, H)\), respectively, where the horizontal coordinates \((q_k^x{[l]}, q_k^y{[l]})\) and  \((q_{\mathrm{U}}^x{[l]}, q_{\mathrm{U}}^y{[l]})\) are fixed.
Across different time slots, the UAVs fly to the next position, e.g., the position of the $ k $-th L-UAV in time slot $ l $ is expressed as \({\bf{q}}_k[l+1] = (q_k^x{[l+1]}, q_k^y{[l+1]}, H)\).\footnote{We assume that the L-UAVs conduct LAE tasks in the flying region, with their positions known across all time slots. Meanwhile, the position of the U-UAV can be estimated via surveillance systems. The goal of the DFBS is to maximize U-UAV sensing precision for ensuring the L-UAV mission safety and securing their data transmission services.}
Besides, the positions of the DFBS and the RIS are fixed over the whole time duration, and are given by \({\bf{q}}_{\mathrm{BS}} = (q_{\mathrm{BS}}^x{}, q_{\mathrm{BS}}^y{}, q_{\mathrm{BS}}^z)\) and \({\bf{q}}_{\mathrm{RIS}} = (q_{\mathrm{RIS}}^x{}, q_{\mathrm{RIS}}^y{}, q_{\mathrm{RIS}}^z{})\), respectively.

The transmitted signal of the DFBS in the time slot $ l $ is
\begin{align}
	{\bf{x}}[l] = \sum\limits_{k = 1}^K {{{\bf{w}}_k}[l]{s_{k}}[l]} + {{\bf{w}}_\vartheta}[l]{s_\vartheta}[l],
\end{align}
where $ {s_{k}}[l] \in \mathbb{C} $ denotes the communication signal for the $ k $-th L-UAV, which satisfies $ \mathbb{E}\{ {{{\left| {{s_{k}[l]}} \right|}^2}} \} = 1 $. The corresponding beamforming vector is $ {{\bf{w}}_k}[l] \in {\mathbb{C}^{M \times 1}} $. $ {s_{\vartheta}[l]} \in \mathbb{C} $ is the sensing signal and meets $ \mathbb{E}\{ {{{\left| {{s_{\vartheta}[l]}} \right|}^2}} \} = 1 $, and $ {{\bf{w}}_\vartheta}[l] \in {\mathbb{C}^{M \times 1}} $ represents the sensing beamforming vector at the DFBS.

\subsection{Communication Model}
The received signal at the $ k $-th L-UAV in the $ l $-th time slot can be expressed as
\begin{align}
	{y_{k}}[l] = ({\bf{h}}_{d,k}^{\mathsf{H}}[l] + {\bf{h}}_{\mathrm{r},k}^{\mathsf{H}}[l]{\bf{\Phi}}[l]{\bf{G}}[l]){\bf{x}}[l] + {n_k}[l],
\end{align}
where $ {\bf{h}}_{d,k}[l] \in {\mathbb{C}^{M \times 1}} $ denotes the channel vector between the DFBS and the $k$-th L-UAV, $ {\bf{h}}_{\mathrm{r},k}[l] \in {\mathbb{C}^{N \times 1}} $ represents the channel vector between the RIS and the $k$-th L-UAV, and $ {\bf G}[l] \in {\mathbb{C}^{N \times M}} $ denotes the channel matrix from the DFBS to the RIS. $ {\bf{\Phi }}[l] \buildrel \Delta \over = \mathrm{diag}\{ {\phi _1[l]},{\phi _2[l]},...,{\phi _N[l]}\} \in {\mathbb{C}^{N \times N}} $ with $ {\phi _n[l]} = {e^{\mathrm{j}{\theta _n[l]}}} $ represents the phase shift matrix of the RIS, where $ {\theta _n[l]} \in \left[ {0,2\pi } \right) $. $ {n_k}[l] \sim {\cal C}{\cal N}(0,\sigma _k^2) $ denotes the additive white Gaussian noise (AWGN). 

We assume that the channels $ {\bf{G}}[l] $, $ {{\bf{h}}_{d,k}}[l] $, and $ {{\bf{h}}_{\mathrm{r},k}}[l] $ are modeled as Rician fading. Then, we can obtain
\begin{align}
    {\bf{G}}[l] = {\beta _{\mathrm{G}}}\left(\sqrt {\frac{\kappa_{\mathrm{G}} }{{\kappa_{\mathrm{G}} + 1}}}{\bm{\alpha }} _{\mathrm{G}}^{\mathrm{LoS}} + \sqrt {\frac{1}{{\kappa_{\mathrm{G}} + 1}}} {\bm{\alpha }} _{\mathrm{G}}^{\mathrm{NLoS}}[l]\right),
\end{align}
{\small \begin{align}
    {{\bf{h}}_{\mathrm{d},k}}[l] = {\beta _{\mathrm{d},k}}[l]\left(\sqrt {\frac{{{\kappa _{\mathrm{d},k}}}}{{{\kappa _{\mathrm{d},k}} + 1}}} {\bm{\alpha }}_{\mathrm{d},k}^{\mathrm{LoS}}[l] + \sqrt {\frac{1}{{{\kappa _{\mathrm{d},k}} + 1}}} {\bm{\alpha }}_{\mathrm{d},k}^{\mathrm{NLoS}}[l]\right),
\end{align}}%
and
{\small \begin{align}
    {{\bf{h}}_{\mathrm{r},k}}[l] = {\beta _{\mathrm{r},k}}[l]\left(\sqrt {\frac{{{\kappa _{\mathrm{r},k}}}}{{{\kappa _{\mathrm{r},k}} + 1}}} {\bm{\alpha }}_{\mathrm{r},k}^{\mathrm{LoS}}[l] + \sqrt {\frac{1}{{{\kappa _{\mathrm{r},k}} + 1}}} {\bm{\alpha }}_{\mathrm{r},k}^{\mathrm{NLoS}}[l]\right),
\end{align}}%
where 
$\beta _{\mathrm{G}} = \frac{\beta_0}{\Vert {\bf{q}}_{\mathrm{BS}} - {\bf{q}}_{\mathrm{RIS}}\Vert}$,
$\beta _{\mathrm{d},k} [l] = \frac{\beta_0}{\Vert {\bf{q}}_{\mathrm{BS}} - {\bf{q}}_{k}[l]\Vert}$
$\beta _{\mathrm{r},k} [l] = \frac{\beta_0}{\Vert {\bf{q}}_{\mathrm{RIS}} - {\bf{q}}_{k}[l]\Vert}$
represent the distance dependent path loss, and $\beta_0$ denotes the path loss at the reference distance $1 \mathrm{m}$. $ {{\bm{\alpha }}^{\mathrm{LoS}}_{\mathrm{G}}} = {\bf{a}}_N({\theta _{\mathrm{R,B}}}){{\bf{a}}_M^{\mathsf{H}}}({\theta _{\mathrm{B, R}}}) $ is the LoS component with the steering vectors $ {{\bf{a}}_N}({\theta _{\mathrm{R,B}}}) = {[1,{e^{ - {\mathrm{j}}\pi \sin {\theta _{\mathrm{R,B}}}}},...,{e^{ - {\mathrm{j}}(N - 1)\pi \sin {\theta _{\mathrm{R,B}}}}}]^{\mathsf{H}}} $, and $ {{\bf{a}}_M}({\theta _{\mathrm{B, R}}}) = {[1,{e^{ - {\mathrm{j}}\pi \sin {\theta _{\mathrm{B, R}}}}},...,{e^{ - {\mathrm{j}}(M - 1)\pi \sin {\theta _{\mathrm{B, R}}}}}} ]^{\mathsf{H}} $, where $ {\theta _{\mathrm{R,B}}}$ and $ {\theta _{\mathrm{B, R}}} $ denote the direct-of-arrival (DoA) and direct-of-departure of channel $ {\bf{G}} $, respectively. 
Meanwhile, $ \alpha _{{\rm{d}},k}^{{\rm{LoS}}}[l] = {\bf{a}}_M^{\mathsf{}}({\theta _{{\rm{BS}},k}}[l]) $ and $ {{\bm{\alpha }}^{\mathrm{LoS}}_{\mathrm{r},k}[l]} = {\bf{a}}_N^{\mathsf{}}({\theta _{{\rm{RIS}},k}}[l]) $ are the steering vectors from the DFBS to the $ k $-th L-UAV, and from the RIS to the $ k $-th L-UAV, respectively, where $ {\theta _{\mathrm{BS}, k}[l]} $ and $ {\theta _{\mathrm{RIS}, k}[l]} $ are the DoAs at the $ k $-th L-UAV corresponding to the direct link by the DFBS, and the reflecting link by the RIS, respectively.
$ {{\bm{\alpha }}^{\mathrm{NLoS}}_{\mathrm{G}}[l]} $, $ {{\bm{\alpha }}^{\mathrm{NLoS}}_{\mathrm{r},k}[l]} $, and $ {{\bm{\alpha }}^{\mathrm{NLoS}}_{\mathrm{d},k}[l]} $ represent the NLoS components, having a zero mean and unit variance. $ \kappa_{\mathrm{G}} $, $\kappa_{\mathrm{d},k}$, and $\kappa _{\mathrm{r},k}$ are the Rician factor. 

The decoded signal-to-interference-plus-nois-ratio of the $k$-th L-UAV can be denoted as
\begin{align}
	{\gamma _k}[l] = \frac{{{{\left| {{\bf{H}}_k^{\mathsf{H}}[l]{{\bf{w}}_k[l]}} \right|}^2}}}{{\sum\limits_{i = 1,i \ne k}^K {{\left| {{\bf{H}}_k^{\mathsf{H}}[l]{{\bf{w}}_i[l]}} \right|^2}} + {{\left| {{\bf{H}}_k^{\mathsf{H}}[l]{{\bf{w}}_\vartheta[l] }} \right|}^2} + \sigma_k^2}},
\end{align}
where $ {\bf{H}}_k^{\mathsf{H}}[l] \buildrel \Delta \over = {\bf{h}}_{\mathrm{d},k}^{\mathsf{H}}[l] + {\bf{h}}_{\mathrm{r},k}^{\mathsf{H}}[l]{\bf{\Phi}}[l] {\bf{G}}[l] $. Then, the data rate of the $ k $-th L-UAV can be expressed as
\begin{align}
	{R_k}[l] = {\log _2}(1 + {\gamma _k}[l]).
\end{align}

\subsection{Sensing Model}
The DFBS utilizes the same signals for performing communication services and sensing detections. Thus, the echo signal coming from the U-UAV in time slot $ l $ can be written as 
\begin{align}
	{{\bf {y}}_{t}}[l] = {{\bf{G}}_t}[l]{\bf{x}}[l] + {{\bf{n}}_{t}}[l],
\end{align}
where $ {{\bf{G}}_t}[l] \buildrel \Delta \over = ({{\bf{g}}_{\mathrm{d},t}}[l] + {{\bf{G}}^{\mathsf{H}}[l]}{\bf{\Phi }}[l]{{\bf{g}}_{\mathrm{r},t}}[l])({\bf{g}}_{\mathrm{d},t}^{\mathsf{H}}[l] + {\bf{g}}_{\mathrm{r},t}^{\mathsf{H}}[l]{\bf{\Phi}}[l]\\ {\bf{G}}[l]) $, $ {\bf{g}}_{\mathrm{d},t}[l] \in {\mathbb{C}^{M \times 1}} $ and $ {\bf{g}}_{\mathrm{r},t}[l] \in {\mathbb{C}^{N \times 1}} $ represent the channel vectors from the DFBS to the U-UAV and from the RIS to the U-UAV, respectively. Adopting a common assumption in radar sensing \cite{10364735,luoRISAided2023}, we model the propagation paths between the DFBS/RIS and the U-UAV as LoS channels. Specifically, these channels can be expressed as $ {{\bf{g}}_{\mathrm{d},t}}[l] = {\beta _{\mathrm{d},t}[l]}{{\bf{a}}_M^{\mathsf{}}}(\theta _t^1[l]) $ and $ {{\bf{g}}_{\mathrm{r},t}}[l] = {\beta _{\mathrm{r},t}[l]}{{\bf{a}}_N^{\mathsf{}}}(\theta _t^2[l]) $,
where $\theta _t^1[l] $ and $ \theta _t^2[l] $ denote the DoAs of the U-UAV with respect to the DFBS and the RIS, respectively. $ {\beta _{\mathrm{d},t}}[l] = \frac{{{\beta _0}}}{{\left\| {{{\bf{q}}_{{\rm{BS}}}} - {{\bf{q}}_{\rm{U}}}[l]} \right\|}} $ and $ {\beta _{\mathrm{r},t}}[l] = \frac{{{\beta _0}}}{{\left\| {{{\bf{q}}_{{\rm{RIS}}}} - {{\bf{q}}_{\rm{U}}}[l]} \right\|}} $ denote the corresponding distance dependent path loss.
$ {{\bf{n}}_{t}}[l] \sim {\cal C}{\cal N}({\bf{0}},\sigma_{t}^2{{\bf{I}}_M}) $ denotes the AWGN. Then, the radar sensing SNR for the U-UAV is given by
\begin{align}
	{\rm{SN}}{{\rm{R}}_t}[l] = \frac{{\sum\limits_{i = 1}^K {{{\left| {{\bf{G}}_t^{\mathsf{H}}[l]{{\bf{w}}_i[l]}} \right|}^2}}  + {{\left| {{\bf{G}}_t^{\mathsf{H}}[l]{{\bf{w}}_\vartheta[l] }} \right|}^2}}}{{\sigma _t^2}}.
\end{align}\(\)


The presence of the U-UAV poses a significant security threat only when its sensing SNR exceeds a specific threshold. Therefore, this paper focuses on optimizing the system's resource allocation while ensuring that the U-UAV's SNR constraint is satisfied. This approach effectively enhances the system's communication performance in the presence of unauthorized intruders and offers a viable solution for the security design of LANs.

\subsection{Problem Formulation}
We aim to optimize the sum-rate of all L-UAVs, by co-designing the communication beamforming ${\bf{w}}_k[l]$, the sensing beamforming ${\bf{w}}_\vartheta[l]$, and the RIS phase shift parameters ${\bf{\Phi }}[l]$.
Then, the optimization problem can be formulated as
\begin{subequations}
	\begin{align}
		{\rm{P1}}:& \mathop {\max }\limits_{{\bf{w}}_k[l], {\bf{w}}_\vartheta[l], {\bf{\Phi }}[l]}\ \frac{1}{L}\sum\limits_{l = 1}^L {\sum\limits_{k = 1}^K {{R_k}[l]} }, \label{27a}\\
		\mathrm{s.t.}&\ \ \frac{{\sum\limits_{i = 1}^K {{{\left| {{\bf{G}}_t^{\mathsf{H}}[l]{{\bf{w}}_i[l]}} \right|}^2}}  + {{\left| {{\bf{G}}_t^{\mathsf{H}}[l]{{\bf{w}}_\vartheta[l] }} \right|}^2}}}{{\sigma _t^2}} \ge {\Gamma}, \forall{l}, \label{27b} \\
		&\left| {{\phi _n}[l]} \right| = 1, n \in {\cal N}, \forall{l}, \label{27d}\\
		&{\sum\limits_{k = 1}^K {\left\| {\bf{w}}_k[l] \right\|^2} + \left\| {\bf{w}}_\vartheta[l] \right\|^2} \le P_{\max}, \forall{l}, \label{27e}
	\end{align}
\end{subequations}
where \(\Gamma\) denotes the pre-defined radar sensing SNR requirement and \(P_{\max}\) is the maximum transmit power of the DFBS.
Constraint \eqref{27b} represents the sensing requirement.
Constraint \eqref{27d} is the physical constriant of the element phase parameters of the RIS.
Lastly, constraint \eqref{27e} denotes the transmit power constraint of the DFBS.

Evidently, problem P1 exhibits non-convexity. The reason is that the variables are coupled in both the objective function and constraint (\ref{27b}).
Moreover, the constraint of the RIS phase parameters to unitary modes exacerbates the challenge of solving P1.

\section{Coordinated Beamforming For RIS-Empowered ISAC System}
In the subsequent discussion, we put forward the employment of FP and AO methods to convert problem P1 into three sub-problems, which we solve through an iterative process. 

\subsection{Transformation of Objective Function}
In the optimization problem P1, due to the coupling relationship between the optimization variables in the numerator and denominator of the objective function, the problem is difficult to solve directly. Therefore, according to \cite{8310563}, we utilize the FP method to handle this obstacle. Specifically, we first introduce the auxiliary variable ${\bf{r}}[l] \buildrel \Delta \over = {[{r_1[l]},{r_2[l]},...,{r_K[l]}]^{\mathsf{T}}}$, and then restructure the objective function as
\begin{align}
    f =&\ \frac{1}{L}\sum\limits_{l = 1}^L \left( \sum\limits_{k = 1}^K {{{\log }_2}(1 + {r_k}[l])} - \sum\limits_{k = 1}^K {{r_k}[l]} \notag \right.\\
    &+ \left. \sum\limits_{k = 1}^K {\frac{{(1 + {r_k}[l]){{\left| {{\bf{H}}_k^{\mathsf{H}}[l]{{\bf{w}}_k}[l]} \right|}^2}}}{{\sum\limits_{i = 1}^K {{{\left| {{\bf{H}}_k^{\mathsf{H}}[l]{{\bf{w}}_i}[l]} \right|}^2}} + {{\left| {{\bf{H}}_k^{\mathsf{H}}[l]{{\bf{w}}_\vartheta }[l]} \right|}^2} + \sigma _k^2}}} \right). \label{21}
\end{align}

\begin{Proposition}
The original optimization problem is mathematically equivalent to the following reformulated problem
\begin{subequations}
	\begin{align}
		{\rm{P2}}: \mathop {\max }\limits_{{\bf{w}}_k[l], {\bf{w}}_\vartheta[l], {\bf{\Phi }}[l], {\bf{r}}[l]}\ &\eqref{21} \nonumber \\
		\mathrm{s.t.}\qquad\quad &\eqref{27b}\text{-}\eqref{27e} \nonumber.
	\end{align}
\end{subequations}
\begin{proof} 
    Notice that when \( {\bf w}[l] \) is fixed, the function in \eqref{21} is a differentiable concave function with respect to \( {\bf r}[l] \). As such, the optimal solution for \( {\bf r}[l] \) can be derived by setting each \( \frac{\partial f}{\partial {\bf r}[l]} \) to zero. Once the optimized \( {\bf r}[l] \) is substituted back into \eqref{21}, the objective function in \eqref{27a} is precisely recovered. This process confirms the equivalence between the two problems {\rm{P1}} and {\rm{P2}}.
\end{proof}
\end{Proposition}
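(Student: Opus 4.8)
The plan is to establish the equivalence between P1 and P2 via the classical quadratic-transform / Lagrangian-dual-transform argument of fractional programming. The core idea is that the auxiliary variables $r_k[l]$ are introduced precisely so that, after maximizing $f$ over $\mathbf{r}[l]$ in closed form, one recovers the original objective term-by-term. Since the constraint sets of P1 and P2 are identical (both use \eqref{27b}--\eqref{27e} and neither constrains $\mathbf{r}[l]$), it suffices to show that for every feasible choice of beamformers and RIS phases, $\max_{\mathbf{r}[l]} f$ equals the objective $\frac{1}{L}\sum_l\sum_k R_k[l]$ in \eqref{27a}.

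First I would fix the beamforming vectors and RIS phase matrix, so that each $\mathbf{H}_k[l]$ and hence each SINR $\gamma_k[l]$ is a constant, and observe that $f$ decouples across both $l$ and $k$: it is a sum of terms of the form $g(r)=\log_2(1+r)-r+\frac{(1+r)\gamma_k[l]}{1+\gamma_k[l]}$ where I have written $\gamma_k[l]=|\mathbf{H}_k^{\mathsf H}[l]\mathbf{w}_k[l]|^2/(\sum_{i\neq k}|\mathbf{H}_k^{\mathsf H}[l]\mathbf{w}_i[l]|^2+|\mathbf{H}_k^{\mathsf H}[l]\mathbf{w}_\vartheta[l]|^2+\sigma_k^2)$. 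Next I would verify concavity in $r$: the second derivative of $\log_2(1+r)$ is $-\frac{1}{\ln 2}(1+r)^{-2}<0$, while the remaining terms are affine in $r$, so $g$ is strictly concave and differentiable on $r>-1$, hence the first-order condition is both necessary and sufficient for the global maximum. Then I would carry out the stationarity computation: setting $\partial f/\partial r_k[l]=0$ gives $\frac{1}{(1+r_k[l])\ln 2}-1+\frac{\gamma_k[l]}{1+\gamma_k[l]}=0$, which rearranges to $r_k[l]^\star=\gamma_k[l]$. (One should double-check whether the $\ln 2$ factor from $\log_2$ is absorbed consistently; in the standard derivation one works with natural logarithms or simply notes the stationary point is $r_k^\star=\gamma_k$ regardless of the logarithm base once the $-r_k$ penalty term is scaled appropriately — I would flag this as the one spot needing care.)

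Finally I would substitute $r_k[l]=\gamma_k[l]$ back into \eqref{21} and simplify: the term $\frac{(1+\gamma_k[l])\gamma_k[l]}{1+\gamma_k[l]}=\gamma_k[l]$ cancels exactly against $-r_k[l]=-\gamma_k[l]$, leaving $\log_2(1+\gamma_k[l])=R_k[l]$, so $\max_{\mathbf{r}[l]} f = \frac{1}{L}\sum_l\sum_k R_k[l]$. Since this holds pointwise over the common feasible region, maximizing $f$ jointly over $(\mathbf{w}_k[l],\mathbf{w}_\vartheta[l],\boldsymbol{\Phi}[l],\mathbf{r}[l])$ yields the same optimal value and the same optimal $(\mathbf{w}_k[l],\mathbf{w}_\vartheta[l],\boldsymbol{\Phi}[l])$ as P1, which is exactly the claimed equivalence. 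The main obstacle is not conceptual difficulty but bookkeeping: ensuring the penalty term $-\sum_k r_k[l]$ is correctly normalized relative to the $\log_2$ (versus $\ln$) so that the stationary point really is $r_k[l]=\gamma_k[l]$ and the cancellation is exact; the concavity check and the decoupling are routine.
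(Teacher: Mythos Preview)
Your proposal is correct and follows essentially the same approach as the paper's own proof: fix the beamformers, use concavity of $f$ in $\mathbf{r}[l]$ to obtain the stationary point $r_k^\star[l]=\gamma_k[l]$ via the first-order condition, substitute back to recover $\sum_k R_k[l]$, and conclude equivalence from the identical constraint sets. Your version is in fact more detailed than the paper's---and you rightly flag the $\ln 2$ normalization issue arising from the mix of $\log_2$ in \eqref{21} with the unscaled penalty $-\sum_k r_k[l]$, which the paper glosses over but which must be resolved for the cancellation to be exact.
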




To efficiently solve the problem P2, we decouple it into three disjoint subproblems and target to find the solution of each subproblem. Detailed information are given as follows.

\subsection{Block Update}

\subsubsection{Closed-Form Optimal Solution for $ {\bf r}[l] $} When other variables are fixed, the optimal value of $ {\bf{r}}[l] $ can be derived by solving $ \frac{{\partial f}}{{\partial {\bf{r}}[l]}} = 0 $, leading to the expression
\begin{align}
	r_k^*[l] = \frac{{{{\left| {{\bf{H}}_k^{\mathsf{H}}[l]{{\bf{w}}_k[l]}} \right|}^2}}}{{\sum\limits_{i = 1,i \ne k}^K {{{\left| {{\bf{H}}_k^{\mathsf{H}}[l]{{\bf{w}}_i[l]}} \right|}^2}}  + {{\left| {{\bf{H}}_k^{\mathsf{H}}[l]{{\bf{w}}_\vartheta[l] }} \right|}^2} + \sigma _k^2}}. \label{n9}
\end{align}

\subsubsection{Update Active Beamforming} Given fixed $ {\bf r}[l] $ and $ {\bf{\Phi}}[l] $, the optimization problem for active beamforming can be derived by isolating relevant terms, resulting in the following problem P3

\begin{subequations}
	{\small \begin{align} 
            {\rm{P3}}:&\mathop {\max }\limits_{{{\bf{w}}_k[l]},{{\bf{w}}_\vartheta[l] }} \;\frac{1}{L}\sum\limits_{l = 1}^L {\sum\limits_{k = 1}^K {\frac{{(1 + {r_k[l]}){{\left| {{\bf{H}}_k^{\mathsf{H}}[l]{{\bf{w}}_k}[l]} \right|}^2}}}{{\sum\limits_{i = 1}^K {{{\left| {{\bf{H}}_k^{\mathsf{H}}[l]{{\bf{w}}_i}[l]} \right|}^2}}  + {{\left| {{\bf{H}}_k^{\mathsf{H}}[l]{{\bf{w}}_\vartheta }[l]} \right|}^2} + \sigma _k^2}}} }, \\
            \mathrm{s.t.}& \quad \frac{{\sum\limits_{i = 1}^K {{{\left| {{\bf{G}}_t^{\mathsf{H}}[l]{{\bf{w}}_i[l]}} \right|}^2}}  + {{\left| {{\bf{G}}_t^{\mathsf{H}}[l]{{\bf{w}}_\vartheta[l] }} \right|}^2}}}{{\sigma _t^2}} \ge {\Gamma}, \forall{l},\label{12b}\\
            &\quad \sum\limits_{k = 1}^K {{{\left\| {{{\bf{w}}_k[l]}} \right\|}^2}} + {\left\| {{{\bf{w}}_\vartheta[l] }} \right\|^2} \le {P_{\max}},\forall{l}.
	\end{align}}%
\end{subequations}
Since the objective function of problem P3 exhibits a sum-of-ratio structure, we utilize the quadratic transformation method and introduce the auxiliary variables $ {\bf{c}}[l] \buildrel \Delta \over = {[{c_1}[l],{c_2}[l],...,{c_K}[l]]^{\mathsf{T}}} $. Then, we can reformulate the objective function as
\begin{align}
    &{\cal F}({\bf{w}},{\bf{c}}) \buildrel \Delta \over = \frac{1}{L}\sum\limits_{l = 1}^L \sum\limits_{k = 1}^K \Bigg(2\sqrt {1 + {r_k}[l]} \Re \{ c_k^*[l]{\bf{H}}_k^{\mathsf{H}}[l]{{\bf{w}}_k}[l]\} \notag \\
    & -\left. {{\left| {{c_k}[l]} \right|}^2}\left(\sum\limits_{i = 1}^K {{{\left| {{\bf{H}}_k^{\mathsf{H}}[l]{{\bf{w}}_i}[l]} \right|}^2}}  + {{\left| {{\bf{H}}_k^{\mathsf{H}}[l]{{\bf{w}}_\vartheta }[l]} \right|}^2} + \sigma _k^2\right)\right),
\end{align}
where 
\begin{align}
    c_k^*[l] = \frac{{\sqrt {1 + {r_k[l]}} {\bf{H}}_k^{\mathsf{H}}[l]{{\bf{w}}_k[l]}}}{{\sum\limits_{i = 1}^K {{{\left| {{\bf{H}}_k^{\mathsf{H}}[l]{{\bf{w}}_i[l]}} \right|}^2}}  + {{\left| {{\bf{H}}_k^{\mathsf{H}}[l]{{\bf{w}}_\vartheta[l] }} \right|}^2} + \sigma _k^2}}. \label{n12}
\end{align}
To handle the quadratic terms in (\ref{12b}), we introduce the following Proposition 2.
\begin{Proposition}
For the quadratic term defined by
\begin{align}
    I({\bf{w}}[l]) = {\left| {{\bf{G}}_t^{\mathsf{H}}[l]{{\bf{w}}}[l]} \right|^2},\label{nn15}
\end{align}
its first-order Taylor expansion to the point $ {\widehat {\bf{w}}}[l] $ is given by
\begin{align}
    \widehat I({\bf{w}}[l],\widehat {\bf{w}}[l]) =&\ 2\Re \{\widehat {\bf{w}}^{\mathsf{H}}[l]{{\bf{G}}_t[l]}{\bf{G}}_t^{\mathsf{H}}[l]{{\bf{w}}[l]}\} \notag \\
    &- \Re \{\widehat {\bf{w}}^{\mathsf{H}}[l]{{\bf{G}}_t[l]}{\bf{G}}_t^{\mathsf{H}}[l]{\widehat {\bf{w}}[l]}\}.
\end{align}
Therefore, $ I({\bf{w}}[l]) $ can be approximated by $ \widehat I({\bf{w}}[l],\widehat {\bf{w}}[l]) $ at point $ {\widehat {\bf{w}}}[l] $.
\end{Proposition}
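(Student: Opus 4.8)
The plan is to verify that $\widehat I({\bf w}[l],\widehat{\bf w}[l])$ is precisely the first-order Taylor expansion of $I({\bf w}[l]) = |{\bf G}_t^{\mathsf H}[l]{\bf w}[l]|^2$ around the point $\widehat{\bf w}[l]$, and then invoke convexity of $I$ to conclude that this linearization is a global lower bound, which is exactly what a successive-convex-approximation step needs. First I would observe that $I$ is a real-valued quadratic form in the complex vector ${\bf w}[l]$: writing ${\bf A}[l] \buildrel \Delta \over = {\bf G}_t[l]{\bf G}_t^{\mathsf H}[l] \succeq {\bf 0}$, we have $I({\bf w}[l]) = {\bf w}^{\mathsf H}[l]{\bf A}[l]{\bf w}[l]$, a Hermitian positive-semidefinite quadratic and hence convex in ${\bf w}[l]$.

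Next I would compute the Wirtinger (conjugate) gradient $\nabla_{{\bf w}^*}I = {\bf A}[l]{\bf w}[l] = {\bf G}_t[l]{\bf G}_t^{\mathsf H}[l]{\bf w}[l]$, and form the standard real-valued first-order expansion about $\widehat{\bf w}[l]$, namely $I({\bf w}[l]) \approx I(\widehat{\bf w}[l]) + 2\Re\{(\nabla_{{\bf w}^*}I|_{\widehat{\bf w}[l]})^{\mathsf H}({\bf w}[l]-\widehat{\bf w}[l])\}$. Substituting the gradient gives $I(\widehat{\bf w}[l]) + 2\Re\{\widehat{\bf w}^{\mathsf H}[l]{\bf G}_t[l]{\bf G}_t^{\mathsf H}[l]({\bf w}[l]-\widehat{\bf w}[l])\}$; using $I(\widehat{\bf w}[l]) = \widehat{\bf w}^{\mathsf H}[l]{\bf G}_t[l]{\bf G}_t^{\mathsf H}[l]\widehat{\bf w}[l] = \Re\{\widehat{\bf w}^{\mathsf H}[l]{\bf G}_t[l]{\bf G}_t^{\mathsf H}[l]\widehat{\bf w}[l]\}$ (the quantity is real and nonnegative), this collapses to $2\Re\{\widehat{\bf w}^{\mathsf H}[l]{\bf G}_t[l]{\bf G}_t^{\mathsf H}[l]{\bf w}[l]\} - \Re\{\widehat{\bf w}^{\mathsf H}[l]{\bf G}_t[l]{\bf G}_t^{\mathsf H}[l]\widehat{\bf w}[l]\}$, matching $\widehat I({\bf w}[l],\widehat{\bf w}[l])$ exactly. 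To upgrade the approximation to a validated lower bound, I would note that for any convex function the first-order Taylor expansion underestimates it globally: $I({\bf w}[l]) - \widehat I({\bf w}[l],\widehat{\bf w}[l]) = ({\bf w}[l]-\widehat{\bf w}[l])^{\mathsf H}{\bf G}_t[l]{\bf G}_t^{\mathsf H}[l]({\bf w}[l]-\widehat{\bf w}[l]) = \|{\bf G}_t^{\mathsf H}[l]({\bf w}[l]-\widehat{\bf w}[l])\|^2 \ge 0$, with equality at ${\bf w}[l]=\widehat{\bf w}[l]$; this makes $\widehat I$ a tight concave (here affine) surrogate suitable for replacing the nonconvex term $I$ inside constraint \eqref{12b}.

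The main obstacle is less a mathematical difficulty than a bookkeeping one: one must be careful with Wirtinger calculus conventions for complex gradients (the factor of $2$ in $2\Re\{\cdot\}$ and the choice of conjugate versus non-conjugate derivative) so that the stated expansion comes out with exactly the coefficients in the proposition, and one must explicitly use that $I(\widehat{\bf w}[l])$ is real to fold it into a single $\Re\{\cdot\}$ term. Everything else — positive semidefiniteness of ${\bf G}_t[l]{\bf G}_t^{\mathsf H}[l]$, the convex-function lower-bound property, and tightness at the expansion point — is routine. I would therefore present the proof as: (i) identify $I$ as an SDP-convex quadratic form; (ii) derive the first-order expansion via the conjugate gradient and simplify using the realness of $I(\widehat{\bf w}[l])$; (iii) conclude $I({\bf w}[l]) \ge \widehat I({\bf w}[l],\widehat{\bf w}[l])$ with equality at $\widehat{\bf w}[l]$, so the replacement is both accurate at the current iterate and conservative globally.
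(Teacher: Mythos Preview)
Your proposal is correct and follows essentially the same route as the paper's proof: both linearize the convex quadratic $I({\bf w}[l])={\bf w}^{\mathsf H}[l]{\bf G}_t[l]{\bf G}_t^{\mathsf H}[l]{\bf w}[l]$ at $\widehat{\bf w}[l]$ and use convexity to obtain $I({\bf w}[l])\ge \widehat I({\bf w}[l],\widehat{\bf w}[l])$. Your treatment is in fact slightly more careful, since you make the Wirtinger-gradient convention explicit and write the gap as $\|{\bf G}_t^{\mathsf H}[l]({\bf w}[l]-\widehat{\bf w}[l])\|^2\ge 0$, whereas the paper simply invokes the first-order Taylor lower bound and simplifies.
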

\begin{proof}
Using the Taylor series approximation, the quadratic term defined in \eqref{nn15} satisfies
\begin{align}
    I({\bf{w}}[l]) \ge I(\widehat {\bf{w}}[l]) + {\left. {\frac{{\partial I}}{{\partial {\bf{w}}[l]}}} \right|_{\widehat {\bf{w}}[l]}}({\bf{w}}[l] - \widehat {\bf{w}}[l]).
\end{align}
Then, we have
\begin{align}
    I({\bf{w}}[l]) \ge&\ {\widehat {\bf{w}}^{\mathsf{H}}}[l]{{\bf{G}}_t}[l]{\bf{G}}_t^{\mathsf{H}}[l]\widehat {\bf{w}}[l] \notag \\
    &+ 2{\widehat {\bf{w}}^{\mathsf{H}}}[l]{{\bf{G}}_t}[l]{\bf{G}}_t^{\mathsf{H}}[l]({\bf{w}}[l] - \widehat {\bf{w}}[l])\notag \\
    \ge&\ 2{\widehat {\bf{w}}^{\mathsf{H}}}[l]{{\bf{G}}_t}[l]{\bf{G}}_t^{\mathsf{H}}[l]{\bf{w}}[l] \notag \\
    &- {\widehat {\bf{w}}^{\mathsf{H}}}[l]{{\bf{G}}_t}[l]{\bf{G}}_t^{\mathsf{H}}[l]\widehat {\bf{w}}[l].
\end{align}
Considering the real characteristic of $ I({\bf{w}}[l]) $, \eqref{nn15} can be rewritten as
\begin{align}
    I({\bf{w}}[l]) \ge&\ 2\Re \{ {\widehat {\bf{w}}^{\mathsf{H}}}[l]{{\bf{G}}_t}[l]{\bf{G}}_t^{\mathsf{H}}[l]{\bf{w}}[l]\} \notag \\
    &- \Re \{ {\widehat {\bf{w}}^{\mathsf{H}}}[l]{{\bf{G}}_t}[l]{\bf{G}}_t^{\mathsf{H}}[l]\widehat {\bf{w}}[l]\} \notag \\
    \buildrel \Delta \over =&\ \widehat I({\bf{w}}[l],\widehat {\bf{w}}[l]).
\end{align}
The proof is completed.
\end{proof}
\noindent Then, (\ref{12b}) can be reformulated as
{\small \begin{align}
	&\sum\limits_{i = 1}^K {2\Re \{\widehat {\bf{w}}_i^{\mathsf{H}}[l]{{\bf{G}}_t[l]}{\bf{G}}_t^{\mathsf{H}}[l]{{\bf{w}}_i[l]}\} - \Re \{\widehat {\bf{w}}_i^{\mathsf{H}}[l]{{\bf{G}}_t[l]}{\bf{G}}_t^{\mathsf{H}}[l]{{\widehat {\bf{w}}}_i[l]}\}} \notag \\
	&+ 2\Re \{\widehat {\bf{w}}_\vartheta ^{\mathsf{H}}[l]{{\bf{G}}_t[l]}{\bf{G}}_t^{\mathsf{H}}[l]{{\bf{w}}_\vartheta[l] }\} - \Re \{\widehat {\bf{w}}_\vartheta ^{\mathsf{H}}[l]{{\bf{G}}_t[l]}{\bf{G}}_t^{\mathsf{H}}[l]{\widehat {\bf{w}}_\vartheta[l] }\} \notag \\
    & \ge \Gamma \sigma _t^2.
\end{align}}%
Based on the derivatives and transformations above, the optimization problem P3 has been transformed into a convex one.

\subsubsection{Passive Beamforming Optimization}
Define $ {{\bf{v}}^{\mathsf{H}}[l]} = [{v_1}[l],{v_2}[l],...,{v_N}[l]] $ where $ {v_n}[l] = {e^{\mathrm{j}{\theta _n}[l]}} $, we can reconstruct the complex channel as
\begin{align}
    \begin{array}{l}
    ({\bf{h}}_{\mathrm{d},i}^{\mathsf{H}}[l] + {\bf{h}}_{\mathrm{r},i}^{\mathsf{H}}[l]{\bf{\Phi }}[l]{\bf{G}}[l]){{\bf{w}}_j}[l] = \\
    \underbrace {\left[ {{\bf{h}}_{\mathrm{r},i}^{\mathsf{H}}[l]{\rm{diag}}({\bf{G}}[l]{{\bf{w}}_j}[l])\ \ {\bf{h}}_{\mathrm{d},i}^{\mathsf{H}}[l]{{\bf{w}}_j}[l]} \right]}_{\widetilde {\bf{H}}_{i,j}^{\mathsf{H}}[l]}\underbrace {\left[ {\begin{array}{*{20}{l}}
    {{\bf{v}}[l]}\\
    {\;1}
    \end{array}} \right]}_{\widetilde {\bf{v}}[l]},i,j \in \{ K,\vartheta \},
    \end{array}
\end{align}
and
\begin{align}
    &{\bf{G}}_t^{\mathsf{H}}[l]{{\bf{w}}_i}[l] \notag \\ 
    &=\left( {\underbrace {\left[ {{{\bf{G}}^{\mathsf{H}}}[l]{\rm{diag}}({{\bf{g}}_{\mathrm{r},t}}[l])\ \ {{\bf{g}}_{\mathrm{d},t}}[l]} \right]}_{{{\widetilde {\bf{G}}}_t}[l]}\widetilde {\bf{v}}[l]} \right)\left( {{{\widetilde {\bf{v}}}^{\mathsf{H}}}[l]\widetilde {\bf{G}}_t^{\mathsf{H}}[l]} \right){{\bf{w}}_i}[l] \notag \\
    &= {\widetilde {\bf{G}}_t}[l]\widetilde {\bf{v}}[l]{\widetilde {\bf{v}}^{\mathsf{H}}}[l]\widetilde {\bf{G}}_t^{\mathsf{H}}[l]{{\bf{w}}_i}[l], i \in \{ K,\vartheta \},
\end{align}
Then, the passive beamforming optimization problem can be written as
\begin{subequations}
	\begin{align}
		&{\rm{P4}}:\mathop {\max }\limits_{\widetilde{\bf{v}}[l]} \;         {\cal F}(\widetilde{\bf{v}}[l]),\\
		&\mathrm{s.t.}\ \sum\limits_{i = 1}^K {{{\left| {{\bf{w}}_i^{\mathsf{H}}[l]           {{\widetilde {\bf{G}}}_t[l]}\widetilde {\bf{v}}[l]{{\widetilde       {\bf{v}}}^{\mathsf{H}}[l]}\widetilde {\bf{G}}_t^{\mathsf{H}}[l]} \right|}^2}}           \notag \\
            &\qquad + {\left| {{\bf{w}}_\vartheta ^{\mathsf{H}}[l]{{\widetilde {\bf{G}}}_t[l]}\widetilde {\bf{v}}[l]{{\widetilde {\bf{v}}}^{\mathsf{H}}[l]}\widetilde {\bf{G}}_t^{\mathsf{H}}[l]} \right|^2} \ge \Gamma \sigma _t^2, \label{52e}\\
		&\qquad \left| {{v_n}[l]} \right| = 1,n \in {\cal N}, \label{52h}
	\end{align}
\end{subequations}
where 
\begin{align}
    &{\cal F}(\widetilde {\bf{v}}[l]) = \frac{1}{L}\sum\limits_{l = 1}^L \sum\limits_{k = 1}^K \Bigg(2\sqrt {1 + {r_k}[l]} \Re \{ c_k^*[l]\widetilde {\bf{H}}_{k,k}^{\mathsf{H}}[l]\widetilde {\bf{v}}[l]\}  \notag \\
    &- \left. {{\left| {{c_k}[l]} \right|}^2}\left(\sum\limits_{i = 1}^K {{{\left| {\widetilde {\bf{H}}_{k,i}^{\mathsf{H}}[l]{\rm{ }}\widetilde {\bf{v}}[l]} \right|}^2}}  + {{\left| {\widetilde {\bf{H}}_{k,\vartheta }^{\mathsf{H}}[l]{\rm{ }}\widetilde {\bf{v}}[l]} \right|}^2} + \sigma _k^2\right)\right).
\end{align}
To address the quadratic terms in the left-hand side of (\ref{52e}), we introduce the following Proposition 3.
\begin{Proposition}
For the quadratic term 
\begin{align}
    L = {\left| {{\bf{w}}^{\mathsf{H}}[l]{{\widetilde {\bf{G}}}_t[l]}\widetilde {\bf{v}}[l]{{\widetilde {\bf{v}}}^{\mathsf{H}}[l]}\widetilde {\bf{G}}_t^{\mathsf{H}}[l]} \right|^2},\label{nn25}
\end{align}
its first-order Taylor expansion on point $ \overline {\widetilde {\bf{v}}}[l] $ yields
\begin{align}
    \bar L({\bf w}[l]) =&\ 4\Re \{ {\widetilde {\bf{G}}_t}[l]{\widetilde {\bf{v}}}[l]{\overline {\widetilde {\bf{v}}} ^{\mathsf{H}}}[l]{\bf{E}}[l]\overline {\widetilde {\bf{v}}} [l]{\overline {\widetilde {\bf{v}}} ^{\mathsf{H}}}[l]\widetilde {\bf{G}}_t^{\mathsf{H}}[l]\}  \notag \\
    &- 3\Re \{ {\widetilde {\bf{G}}_t}[l]\overline {\widetilde {\bf{v}}} [l]{\overline {\widetilde {\bf{v}}} ^{\mathsf{H}}}[l]{\bf{E}}[l]\overline {\widetilde {\bf{v}}} [l]{\overline {\widetilde {\bf{v}}} ^{\mathsf{H}}}[l]{\bf{G}}_t^{\mathsf{H}}[l]\},
\end{align}
where $ {\bf{E}}[l] = \widetilde {\bf{G}}_t^{\mathsf{H}}[l]{\bf{w}}[l]{{\bf{w}}^{\mathsf{H}}}[l]{\widetilde {\bf{G}}_t}[l] $. Thus, $ L({\bf w}[l]) $ can be approximated by $ \bar L({\bf w}[l]) $ at $ \overline {\widetilde {\bf{v}}}[l] $.
\end{Proposition}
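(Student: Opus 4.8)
The plan is to prove Proposition 3 by the same first-order Taylor-expansion (successive-convex-approximation) argument used for Proposition 2, but now applied to a \emph{quartic} function of $\widetilde{\bf v}[l]$ rather than a quadratic one. First I would write $L$ in \eqref{nn25} in terms of the Hermitian matrix ${\bf E}[l] = \widetilde{\bf G}_t^{\mathsf H}[l]{\bf w}[l]{\bf w}^{\mathsf H}[l]\widetilde{\bf G}_t[l]$, observing that
\begin{align}
L = \bigl|{\bf w}^{\mathsf H}[l]\widetilde{\bf G}_t[l]\widetilde{\bf v}[l]\bigr|^2 \bigl\|\widetilde{\bf v}[l]\bigr\|^4
= \bigl(\widetilde{\bf v}^{\mathsf H}[l]{\bf E}[l]\widetilde{\bf v}[l]\bigr)\bigl(\widetilde{\bf v}^{\mathsf H}[l]\widetilde{\bf v}[l]\bigr)^2,
\end{align}
and note that on the feasible set each $|v_n[l]|=1$, so $\widetilde{\bf v}^{\mathsf H}[l]\widetilde{\bf v}[l]=N+1$ is constant; this is what makes the function effectively a product of the quadratic form $\widetilde{\bf v}^{\mathsf H}[l]{\bf E}[l]\widetilde{\bf v}[l]$ with a constant-norm factor and lets the Taylor expansion collapse to the stated two-term form. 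Then I would treat $L$ as a smooth real-valued function $g(\widetilde{\bf v}[l])$ and invoke the standard bound that a (suitably smooth, or here exploiting that ${\bf E}[l]\succeq 0$ so $g$ is convex in $\widetilde{\bf v}[l]$ along the constraint manifold) function lies above its first-order Taylor expansion:
\begin{align}
g(\widetilde{\bf v}[l]) \ge g(\overline{\widetilde{\bf v}}[l]) + 2\Re\Bigl\{\bigl(\nabla_{\widetilde{\bf v}}g\bigr)^{\mathsf H}\big|_{\overline{\widetilde{\bf v}}[l]}\bigl(\widetilde{\bf v}[l]-\overline{\widetilde{\bf v}}[l]\bigr)\Bigr\}.
\end{align}

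The core computation is therefore the Wirtinger gradient of $g$ at $\overline{\widetilde{\bf v}}[l]$. I would compute $\partial g/\partial \widetilde{\bf v}^{*}[l]$ by the product rule applied to $(\widetilde{\bf v}^{\mathsf H}{\bf E}\widetilde{\bf v})(\widetilde{\bf v}^{\mathsf H}\widetilde{\bf v})^2$, obtaining a sum of terms of the form ${\bf E}\widetilde{\bf v}\,\|\widetilde{\bf v}\|^4$ and $\widetilde{\bf v}\,(\widetilde{\bf v}^{\mathsf H}{\bf E}\widetilde{\bf v})\|\widetilde{\bf v}\|^2$, then evaluate at $\overline{\widetilde{\bf v}}[l]$ and substitute back. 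Assembling the constant term $g(\overline{\widetilde{\bf v}}[l])$ with the linear term and collecting the real parts should yield the claimed coefficients $4\Re\{\cdot\}$ and $-3\Re\{\cdot\}$ — the $4$ and $-3$ arising precisely because the quartic Taylor expansion of a degree-$d$-homogeneous-like piece contributes $d$ copies in the gradient and $d-1$ survive after subtracting the base value (here the effective structure gives $4$ and $4-1=3$, up to the normalization hidden in how the two factors are grouped). I would then state that since ${\bf E}[l]\succeq 0$, the relevant piece is convex and the inequality $L({\bf w}[l]) \ge \bar L({\bf w}[l])$ holds, so $\bar L$ is a valid surrogate that is tight at $\overline{\widetilde{\bf v}}[l]$, which is exactly what the SCA framework in P4 requires.

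The main obstacle I anticipate is \textbf{bookkeeping rather than conceptual difficulty}: the function is quartic in a complex vector, so the Wirtinger calculus produces several terms, and matching them cleanly to the compact two-term expression in the statement — in particular getting the matrix orderings $\widetilde{\bf G}_t[l]\,\widetilde{\bf v}[l]\,\overline{\widetilde{\bf v}}^{\mathsf H}[l]\,{\bf E}[l]\,\overline{\widetilde{\bf v}}[l]\,\overline{\widetilde{\bf v}}^{\mathsf H}[l]\,\widetilde{\bf G}_t^{\mathsf H}[l]$ exactly right and verifying the $4$ versus $3$ coefficients — will require care. A secondary subtlety is justifying the \emph{direction} of the inequality: for a general quartic the first-order expansion is neither an upper nor a lower bound, so I would lean on the fact that ${\bf E}[l]$ is rank-one positive semidefinite and that $\|\widetilde{\bf v}[l]\|$ is pinned to $\sqrt{N+1}$ on the feasible set \eqref{52h}, which reduces $g$ to (a constant multiple of) the convex quadratic form $\widetilde{\bf v}^{\mathsf H}[l]{\bf E}[l]\widetilde{\bf v}[l]$ and makes the first-order expansion a genuine lower bound, mirroring the argument in the proof of Proposition 2.
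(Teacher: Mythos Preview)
Your overarching strategy --- take a first-order Taylor (Wirtinger) expansion of $L$ in $\widetilde{\bf v}[l]$ and use it as a minorant, exactly mirroring Proposition~2 --- is precisely what the paper does; its entire proof of Proposition~3 is the sentence ``The proof process is similar to that of Proposition~2. Thus the detailed proof is omitted here.'' So at the level of approach you match the paper.

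However, the concrete computation you outline has a genuine error that would make the subsequent steps fail. The object inside $|\cdot|^2$ in \eqref{nn25} is the scalar ${\bf w}^{\mathsf H}[l]\widetilde{\bf G}_t[l]\widetilde{\bf v}[l]$ times the \emph{row vector} $\widetilde{\bf v}^{\mathsf H}[l]\widetilde{\bf G}_t^{\mathsf H}[l]$, so
\begin{align}
L \;=\; \bigl|{\bf w}^{\mathsf H}[l]\widetilde{\bf G}_t[l]\widetilde{\bf v}[l]\bigr|^2\,\bigl\|\widetilde{\bf G}_t[l]\widetilde{\bf v}[l]\bigr\|^2
\;=\; \bigl(\widetilde{\bf v}^{\mathsf H}[l]{\bf E}[l]\widetilde{\bf v}[l]\bigr)\bigl(\widetilde{\bf v}^{\mathsf H}[l]\widetilde{\bf G}_t^{\mathsf H}[l]\widetilde{\bf G}_t[l]\widetilde{\bf v}[l]\bigr),
\end{align}
\emph{not} $(\widetilde{\bf v}^{\mathsf H}{\bf E}\widetilde{\bf v})(\widetilde{\bf v}^{\mathsf H}\widetilde{\bf v})^2$. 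The second quadratic form involves $\widetilde{\bf G}_t^{\mathsf H}[l]\widetilde{\bf G}_t[l]$ and is \emph{not} pinned to a constant by the unit-modulus constraint \eqref{52h}. Consequently your key simplification --- that on the feasible set $L$ collapses to a constant multiple of the convex quadratic $\widetilde{\bf v}^{\mathsf H}{\bf E}\widetilde{\bf v}$ --- is invalid, and with it both your justification of $L\ge\bar L$ and your heuristic for the $4/{-3}$ coefficients. You need to redo the Wirtinger gradient of the product of two distinct PSD quadratic forms; the linearization will involve both ${\bf E}[l]$ and $\widetilde{\bf G}_t^{\mathsf H}[l]\widetilde{\bf G}_t[l]$, and the lower-bound direction must be argued without the ``constant-norm'' shortcut.
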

\begin{proof}
The proof process is similar to that of Proposition 2. Thus the detailed proof is omitted here.
\end{proof}

\noindent Then, (\ref{52e}) can be converted to
\begin{align}
    \sum\limits_{i = 1}^K {\overline L ({{\bf{w}}_i}[l])}  + \overline L ({{\bf{w}}_\vartheta[l]}) \ge \Gamma \sigma _t^2.
\end{align}
We can observe that the key challenge in optimizing the subproblem P4 lies in the unit-modulus constraint (\ref{52h}). This constraint can be effectively addressed using the penalty convex-concave procedure proposed in \cite{10472878}. In particular, we can reformulate the unit-modulus constraint as
\begin{align}
    1 \le {\left| {{v_n}[l]} \right|^2} \le 1,n \in {\cal N},
\end{align}
Based on Proposition 2, the part $ 1 \le {\left| {{v_n}[l]} \right|^2} $ can be recast as
\begin{align}
    1 \le 2\Re \{\widehat v_n^{\mathsf{H}}[l]{v_n[l]}\}- \Re \{\widehat v_n^{\mathsf{H}}[l]{\widehat v_n[l]}\}, n \in {\cal N}.
\end{align}
Following the aforementioned manipulations, problem P4 becomes a convex optimization problem.

\subsection{Convergence Analysis}
Building upon the discussions presented above, this paper introduces an iterative algorithm to address problem P2, which is concisely summarized in Algorithm 1. The convergence property of the proposed Algorithm 1 is established in the following theorem.

\begin{Proposition}
\label{theorem 1}
Algorithm 1 guarantees that the objective function value of problem P2 does not decrease during each iteration and will eventually reach a converged point.
\end{Proposition}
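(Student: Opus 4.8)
The plan is to establish convergence by a standard block-coordinate ascent (alternating optimization) argument combined with the minorize-maximize property of the surrogate functions. First I would introduce notation for the objective value: let $f^{(t)}$ denote the value of the reformulated objective \eqref{21} after the $t$-th outer iteration, where one iteration consists of the three block updates in the order (i) the auxiliary variable $\mathbf{r}[l]$ via the closed form \eqref{n9}, (ii) the active beamforming $\{\mathbf{w}_k[l],\mathbf{w}_\vartheta[l]\}$ by solving the convexified problem P3, and (iii) the passive beamforming $\widetilde{\mathbf{v}}[l]$ by solving the convexified problem P4. The goal is to show $f^{(t+1)}\ge f^{(t)}$ for all $t$, and that the sequence $\{f^{(t)}\}$ is bounded above, hence convergent.

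The key steps, in order, are as follows. \emph{Step 1 (update of $\mathbf{r}[l]$):} since with all other variables fixed $f$ is concave and differentiable in $\mathbf{r}[l]$ (as argued in the proof of Proposition~1), the closed-form solution \eqref{n9} is the exact global maximizer over $\mathbf{r}[l]$, so this step cannot decrease $f$. \emph{Step 2 (update of active beamforming):} here I would invoke the quadratic-transform identity, noting that for the optimal choice of the auxiliary variable $\mathbf{c}[l]$ in \eqref{n12} one has $\mathcal{F}(\mathbf{w},\mathbf{c})$ equal to the true sum-of-ratios objective, while for any other $\mathbf{c}[l]$ it is a lower bound; combined with Proposition~2, which guarantees that $\widehat I(\mathbf{w}[l],\widehat{\mathbf{w}}[l])\le I(\mathbf{w}[l])$ with equality at $\widehat{\mathbf{w}}[l]$, the convexified P3 is a genuine minorizer of P3 that is tight at the current iterate. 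Therefore solving the convex P3 produces a new $\{\mathbf{w}_k[l],\mathbf{w}_\vartheta[l]\}$ that is feasible for the original P3 (the surrogate sensing constraint is a restriction of \eqref{27b}, so feasibility is preserved) and whose objective value is no smaller than the current one. \emph{Step 3 (update of passive beamforming):} analogously, using Proposition~3 for the sensing-constraint terms and the penalty convex-concave linearization of the unit-modulus constraint, the convexified P4 is again a tight minorizer of P4, so this step is also non-decreasing and feasibility-preserving. Chaining the three inequalities gives $f^{(t+1)}\ge f^{(t)}$. \emph{Step 4 (boundedness):} the feasible set is compact---the power constraint \eqref{27e} bounds $\|\mathbf{w}_k[l]\|,\|\mathbf{w}_\vartheta[l]\|$, the unit-modulus constraint bounds $\widetilde{\mathbf{v}}[l]$, and the optimal $\mathbf{r}[l]$ in \eqref{n9} is then bounded---so the continuous objective $f$ is bounded above on it. A monotone sequence bounded above converges, which yields the claim.

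The main obstacle I expect is the handling of feasibility and tightness of the surrogate constraints under the penalty convex-concave procedure: linearizing $|v_n[l]|^2\ge 1$ around $\widehat{v}_n[l]$ produces an \emph{inner} approximation, so a solution of the convexified P4 is feasible for the original unit-modulus constraint only in the limit (or up to the penalty term), and one must argue carefully that the penalized objective is still monotone and that the penalty weight forces the unit-modulus violation to vanish. A clean way to sidestep this is to state convergence for the penalized surrogate objective and invoke the standard result on the penalty CCP from the cited reference \cite{10472878}, noting that each update solves a convex problem exactly so the surrogate value is non-decreasing, and the sequence of surrogate values is bounded by the same compactness argument. I would also remark briefly that because each surrogate is tight at the current point, the limit point is a stationary point of P2, though the statement as phrased only requires monotone convergence of the objective value, which the four steps above already deliver.
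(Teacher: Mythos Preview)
Your proposal is correct and follows essentially the same route as the paper's own proof: both argue monotone ascent of the P2 objective via the block-coordinate (alternating optimization) updates in Steps~3--6 of Algorithm~1, then invoke boundedness of the feasible set to conclude convergence. Your treatment is in fact more careful than the paper's---you make explicit the minorize-maximize role of Propositions~2 and~3 in guaranteeing feasibility preservation under the SCA surrogates, and you flag the penalty-CCP subtlety for the unit-modulus constraint, both of which the paper's Appendix~A glosses over.
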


\begin{proof}
The detailed proof is provided in Appendix A.
\end{proof}

\subsection{Computational Complexity}
Finally, we briefly analyze the computational complexity of the proposed joint beamforming design algorithm. As shown in Algorithm 1, the computational burden mainly results from the update for $ {\bf w} $ and $ {\bf \Phi} $, whose complexities are of order $ {\cal O}({{((K + 1)M)^3}}) $ and $ {\cal O}({{N^{3.5}}}) $, respectively. Thus, the overall complexity of the proposed algorithm is of order $ {\cal O}({{((K + 1)M)^3}} + N^{3.5}) $. 

\begin{algorithm}[t]
	\caption{The Proposed Algorithm to Solve Problem P1.}
		\begin{algorithmic}[1]
			\STATE
			\textbf{Initialize:} $ {\bf \Phi}^{(0)} $, $ {\bf w}_k^{(0)} $, and $ {\bf{w}}_\vartheta^{(0)} $, iteration index $ p=1 $ and accuracy threshold $ \varepsilon > 0 $.
			\STATE
			\textbf{Repeat}
			\STATE
			Update $ {\bf r}^{(p)} $ according to (\ref{n9});
			\STATE
			Update $ {\bf c}^{(p)} $ accroding to (\ref{n12});
			\STATE
			Update $ {\bf w}_k^{(p)} $ and $ {\bf{w}}_\vartheta^{(p)} $ by solving problem P3;
			\STATE
			Update $ {\bf \Phi}^{(p)} $ by solving problem P4;
                \STATE
                \textbf{Until} the increase of the objective function between two
 adjacent iterations in problem P1 is smaller than $ \varepsilon $.
			\label{algorithm1}
		\end{algorithmic}
\end{algorithm}

\section{Numerical Results}
This section presents numerical results to demonstrate the effectiveness of the proposed joint active and passive beamforming design. The system under consideration comprises a DFBS with $ M=6 $ antennas, four single-antenna L-UAVs, and one single-antenna U-UAV. The noise variances at the U-UAV and L-UAVs are set to $ \sigma _t^2 = \sigma _k^2 = - 90\mathrm{dBm}, \forall{k} $. A distance-dependent path-loss model \cite{8811733} is adopted with path-loss exponents of $ 2.2 $ (DFBS-RIS and RIS-U-UAV links), $ 2.3 $ (RIS-L-UAV links), $ 2.4 $ (DFBS-U-UAV link), and $ 3.5 $ (DFBS-L-UAV links). Because of the severe channel fading caused by the larger distance between the L-UAVs and the U-UAV, the reflected signals from the U-UAV to the L-UAVs are disregarded. All DFBS-RIS and RIS-L-UAV channels follow Rician fading with factors $ \kappa_{\mathrm{G}} $, $\kappa_{\mathrm{d},k}$, and $\kappa _{\mathrm{r},k} = 3\mathrm{dB} $. The maximum transmit power and radar SNR threshold are constrained to $ 33 \mathrm{dBm} $ and $ 4 \mathrm{dB} $, respectively.

To demonstrate the effectiveness of the proposed algorithm, two comparison schemes are considered. 
\begin{itemize}	
	\item Random phase: In this scheme, the phase shifts of the RIS are random given.
	
	\item Without RIS: In this scheme, the communication and sensing links supported by the RIS are excluded.
\end{itemize}
For all schemes, we set different antenna numbers (i.e., $ M=6 $ and $ M=8 $) for comparison.
\begin{figure}[t]
	\centering
	\includegraphics[width=0.92\linewidth,height=0.66\linewidth]{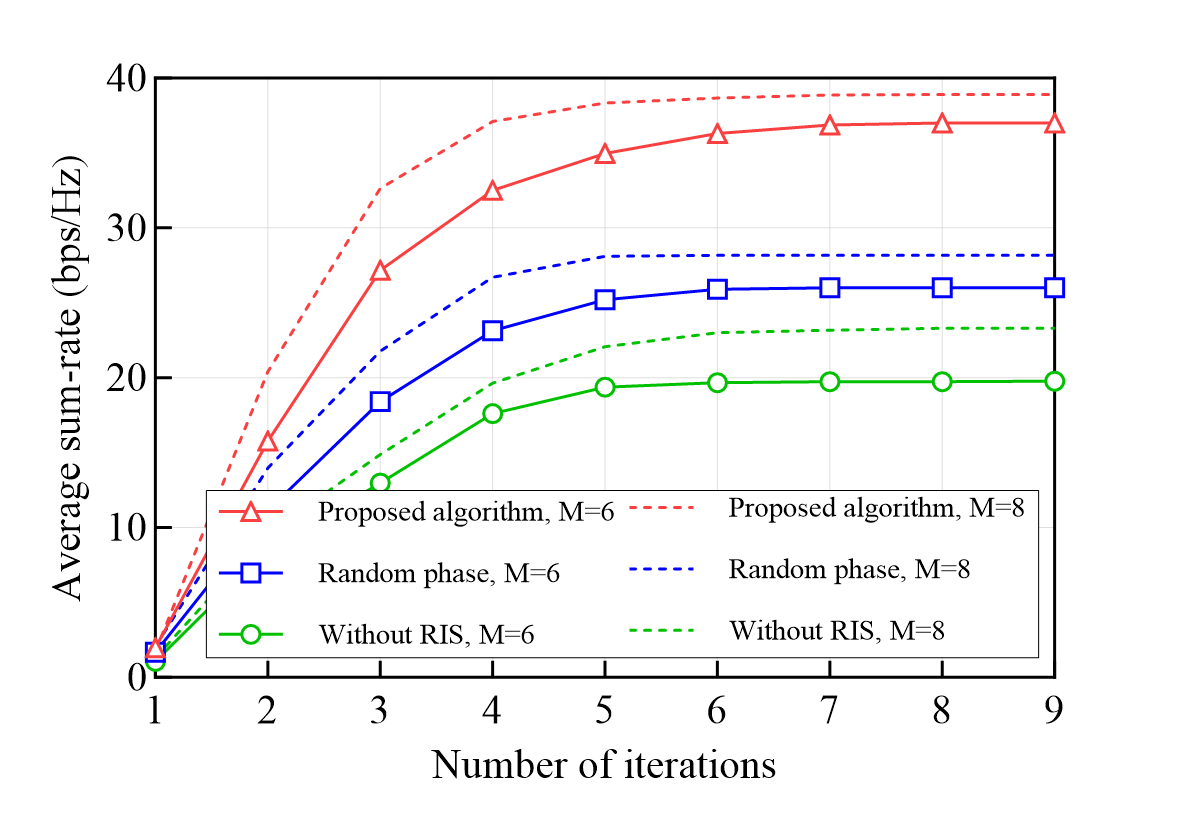}
	\caption{The convergence behaviors of the proposed algorithm and the baseline algorithms.}
	\label{fig:4}
\end{figure}
\begin{figure}[t]
	\centering
	\includegraphics[width=0.92\linewidth,height=0.66\linewidth]{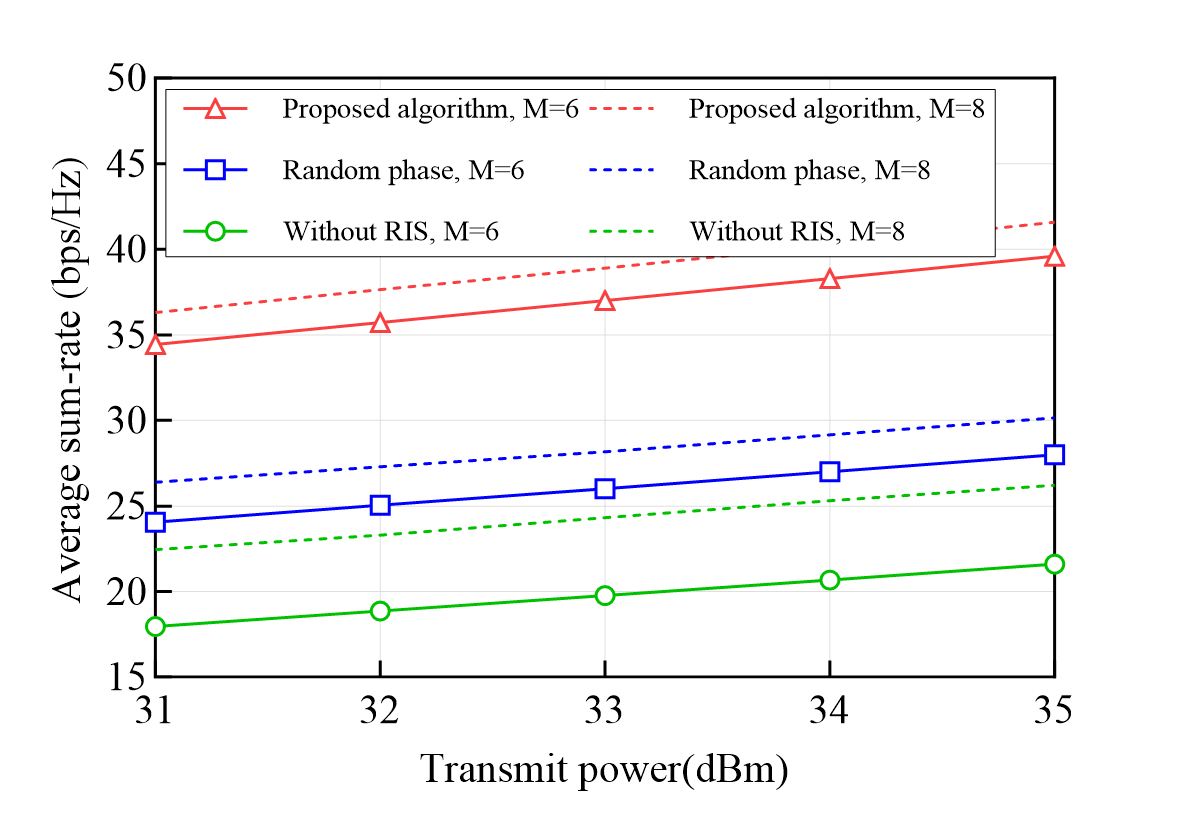}
	\caption{Average sum-rate versus the transmit power at the DFBS.}
	\label{fig:5}
\end{figure}
\begin{figure}[t]
	\centering
	\includegraphics[width=0.92\linewidth,height=0.66\linewidth]{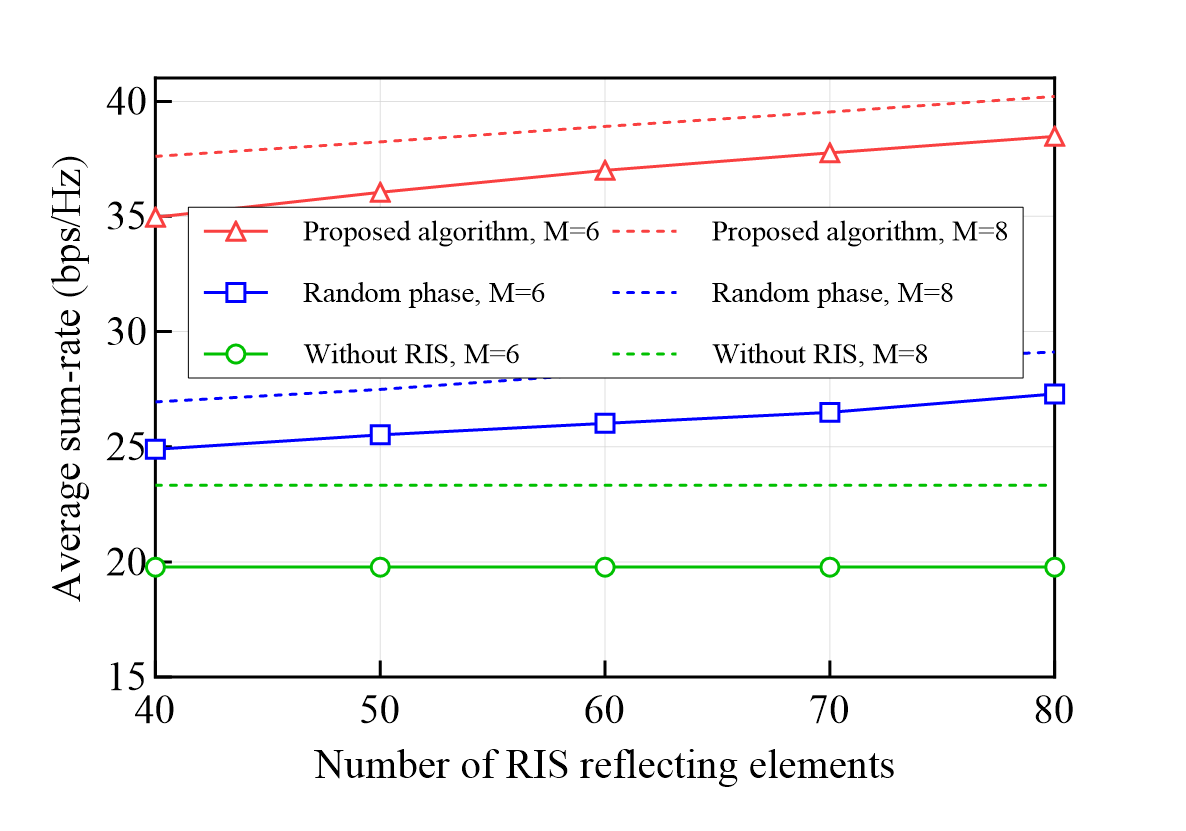}
	\caption{Average sum-rate versus the number of RIS reflecting elements.}
	\label{fig:6}
\end{figure}
\begin{figure}[t]
	\centering
	\includegraphics[width=0.92\linewidth,height=0.66\linewidth]{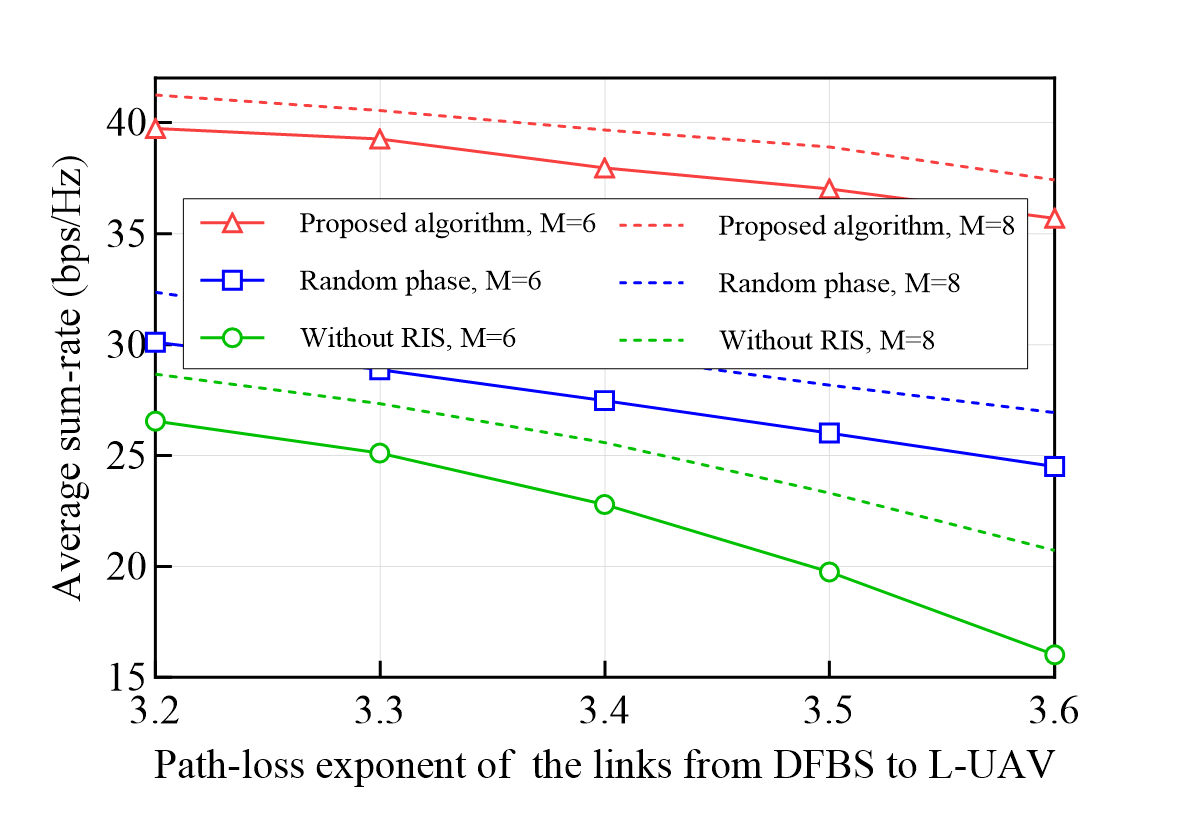}
	\caption{Average sum-rate versus the path-loss exponent of the DFBS-L-UAV links.}
	\label{fig:7}
\end{figure}
As illustrated in Fig. \ref{fig:4}, the convergence analysis reveals notable performance differences across various algorithms. 
Firstly, we can observe that all algorithms achieve convergence within 9 iterations, demonstrating their efficiency in terms of computational complexity. When comparing the proposed algorithm to the baseline algorithm (without RIS), a substantial improvement in average sum-rate is observed. This enhancement is primarily attributed to the introduction of additional NLoS links through the RIS, which facilitates passive beamforming gains. 
Furthermore, the proposed algorithm outperforms the random phase algorithm. This superior performance gain is due to the proposed algorithm's ability to optimize the RIS phase, whereas the random phase algorithm does not exploit this optimization potential.
Additionally, algorithms with $M=8$ antennas consistently outperform their $M=6$ counterparts. This result highlights the significant role of increasing the number of antennas $M$, as it leads to enhanced beamforming gain and increased spatial resources, thereby improving the overall transmission rate.


Fig. \ref{fig:5} depicts the relationship between the transmit power threshold $P_{\max}$ and the average sum-rate performance across all evaluated algorithms. As observed, the sum-rate increases monotonically with rising $P_{\max}$, indicating that higher transmit power leads to improved communication throughput. This performance gain is primarily attributed to the increase in received signal power at the L-UAVs, which in turn enhances the SNR. Moreover, the proposed algorithm consistently outperforms the baseline algorithm (without RIS) across all power levels. This performance gain stems from the RIS’s capability to intelligently reconfigure the wireless propagation environment by introducing additional reflective paths and enabling passive beamforming. As a result, the RIS not only amplifies the beneficial signal components but also mitigates interference, thereby maximizing the overall system efficiency. 

Fig. \ref{fig:6} illustrates the variation of the average sum-rate as a function of the number of RIS reflecting elements. As observed, both the proposed algorithm and the random phase algorithm (with $ M = 6 $ or $ M = 8 $) demonstrate an increasing average sum-rate with the growth in the number of RIS elements. This trend can be attributed to the enhanced channel gains provided by the additional reflecting elements. Specifically, more RIS elements enable the optimization of virtual signal propagation paths, thereby improving both the direct and reflected signal components. As a result, the overall communication link quality is improved, leading to a corresponding increase in the sum-rate. 

\begin{figure}[t]
	\centering
	\includegraphics[width=0.92\linewidth,height=0.66\linewidth]{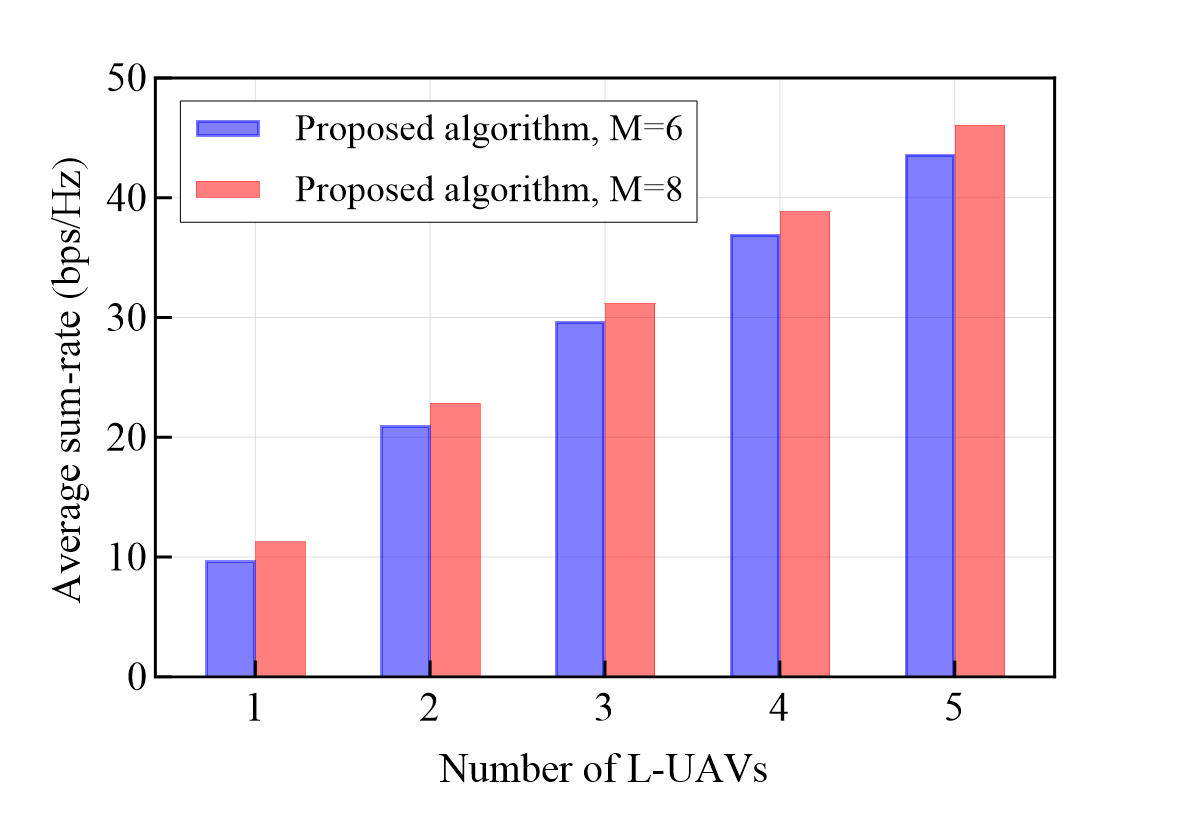}
	\caption{Average sum-rate versus the number of L-UAVs.}
	\label{fig:8}
\end{figure}

Fig. \ref{fig:7} illustrates the effect of the DFBS-to-L-UAV path-loss exponent on the optimized average sum-rate. As shown, the average sum-rate decreases progressively with an increase in the path-loss exponent. Furthermore, the performance of the algorithm without RIS experiences a more significant decline as the path-loss exponent increases. This suggests that, in scenarios where the direct link suffers from substantial blockage, the performance improvement enabled by the RIS-constructed links becomes increasingly critical. Specifically, RIS plays a pivotal role in compensating for the attenuation of the direct link by providing alternative signal propagation paths, thereby mitigating the adverse effects of severe path-loss and sustaining system performance. 


Fig. \ref{fig:8} demonstrates that the average sum-rate improves as the number of L-UAVs increases. This growth can be primarily attributed to the enhanced spectrum resource reuse capability with a higher number of L-UAVs. As the number of L-UAVs expands, the system can allocate the data transmission tasks to a greater number of UAVs within the same frequency band. Consequently, more efficient data transmission is achieved, thereby boosting the overall system performance. 

\section{Conclusion}
In this paper, we investigated the joint optimization of active beamforming and passive beamforming in a RIS-empowered ISAC system for secure LANs. The proposed framework aimed to maximize the sum-rate of L-UAVs while guaranteeing the radar SNR constraint for detecting the U-UAV, the transmit power limitations, and the unit modulus properties of the RIS phase shift parameters. By leveraging the FP and the AO methods, the non-convex problem was decomposed into three tractable subproblems, enabling efficient updates of the auxiliary variables, the active beamforming at the DFBS, and the passive beamforming at the RIS. Numerical simulations demonstrated that the joint optimization scheme significantly outperformed disjoint optimization benchmarks in terms of communication sum-rate, highlighting the critical role of the RIS in harmonizing communication and sensing functionalities for secure LANs. 

\begin{appendices}
\section{Proof of Proposition \ref{theorem 1}}
\begin{proof}
We prove Proposition \ref{theorem 1} by using a method similar to the ones employed in \cite{8811733,10288203}. Firstly, for simplify, we use $ {{\bf{w}}_k}, {{\bf{w}}_\vartheta }, {\bf{\Phi }} $, and $ {\bf{r}} $ to represent the optimization variables in time slot $ l $. Then, we define $ {R_{{P_2}}}({{{\bf{w}}_k^{(z)}},{{\bf{w}}_\vartheta^{(z)} },{\bf{\Phi }}^{(z)},{\bf{r}}^{(z)}}) $, 
$ {R_{{P_3}}}({{{\bf{w}}_k^{(z)}},{{\bf{w}}_\vartheta^{(z)} },{\bf{\Phi }}^{(z)},{\bf{r}}^{(z)}}) $, and $ {R_{P_4}}({{{\bf{w}}_k^{(z)}},{{\bf{w}}_\vartheta^{(z)} },{\bf{\Phi }}^{(z)},{\bf{r}}^{(z)}}) $ as the respective objective function values for problem P2, P3, and P4 in the $ z $-th iteration. Then, in the $ (z+1) $-th iteration, concerning the problem of transmit power and time optimization, the following holds
\begin{align}
    &R_{{P_2}}({{\bf{r}}^{(z)}, {{\bf{w}}_k^{(z)}}, {{\bf{w}}_\vartheta^{(z)} }, {\bf{\Phi }}^{(z)}}) \notag \\
    &\mathop = \limits^{(a)} {R_{{P_3}}}({{\bf{r}}^{(z)}, {{\bf{w}}_k^{(z)}}, {{\bf{w}}_\vartheta^{(z)} }, {\bf{\Phi }}^{(z)}}) \notag \\
    &\mathop \le \limits^{(b)} {R_{{P_3}}}({{\bf{r}}^{(z+1)}, {{\bf{w}}_k^{(z+1)}}, {{\bf{w}}_\vartheta^{(z+1)} }, {\bf{\Phi }}^{(z)}}) \notag \\
    &= {R_{{P_2}}}({{\bf{r}}^{(z+1)}, {{\bf{w}}_k^{(z)}},{{\bf{w}}_\vartheta^{(z)} },{\bf{\Phi }}^{(z)}}),\label{37}
\end{align}
where $ (a) $ holds due to the problems P2 and P3 have the same solution; and $ (b) $ holds as $ ({{\bf{r}}^{(z+1)}, {{\bf{w}}_k^{(z+1)}}, {{\bf{w}}_\vartheta^{(z+1)} }}) $ denotes the optimal solution to problem P3 in the $ (z+1) $-th iteration. 
	
For the IRS beamforming optimization problem in the $ (z+1) $-th iteration, following the same analysis in \eqref{37}, we have
\begin{align}
    &R_{{P_2}}({{\bf{r}}^{(z+1)}, {{\bf{w}}_k^{(z+1)}}, {{\bf{w}}_\vartheta^{(z+1)} }, {\bf{\Phi }}^{(z)}}) \notag \\
    &= {R_{{P_4}}}({{\bf{r}}^{(z+1)}, {{\bf{w}}_k^{(z+1)}}, {{\bf{w}}_\vartheta^{(z+1)} }, {\bf{\Phi }}^{(z)}}) \notag \\
    &\le {R_{{P_4}}}({{\bf{r}}^{(z+1)}, {{\bf{w}}_k^{(z+1)}}, {{\bf{w}}_\vartheta^{(z+1)} }, {\bf{\Phi }}^{(z+1)}}) \notag \\
    &= {R_{{P_2}}}({{\bf{r}}^{(z+1)}, {{\bf{w}}_k^{(z+1)}},{{\bf{w}}_\vartheta^{(z+1)} },{\bf{\Phi }}^{(z+1)}}),\label{38}
\end{align}
	
\noindent Based on \eqref{37} and \eqref{38}, we can conclude
\begin{align}
    &R_{{P_2}}({{\bf{r}}^{(z)}, {{\bf{w}}_k^{(z)}}, {{\bf{w}}_\vartheta^{(z)} }, {\bf{\Phi }}^{(z)}}) \notag \\
    &\le {R_{{P_2}}}({{\bf{r}}^{(z+1)}, {{\bf{w}}_k^{(z+1)}},{{\bf{w}}_\vartheta^{(z+1)} },{\bf{\Phi }}^{(z+1)}}).
\end{align}
Hence, the objective function value of problem P2 maintains its non-decreasing trend throughout the iterative process. Since the variables are bounded, the objective function value of problem P2 finally converges.
\end{proof}
	
\end{appendices}

\bibliographystyle{IEEEtran}
\bibliography{reference}

\end{document}